\theoremstyle{plain}
\newtheorem{theorem}{Theorem}[section]
\newtheorem{lemma}[theorem]{Lemma}
\newtheorem{corollary}[theorem]{Corollary}
\theoremstyle{definition}
\theoremstyle{remark}
\newcommand{\abbrev}[2]{\expandafter\newcommand\csname #1\endcsname{#2\xspace}}
\DeclareFontFamily{U}{mathx}{\hyphenchar\font45}
\DeclareFontShape{U}{mathx}{m}{n}{
      <5> <6> <7> <8> <9> <10>
      <10.95> <12> <14.4> <17.28> <20.74> <24.88>
      mathx10
      }{}
\DeclareSymbolFont{mathx}{U}{mathx}{m}{n}
\DeclareMathAccent{\widecheck}{0}{mathx}{"71}
\DeclareMathAccent{\wideparen}{0}{mathx}{"75}
\newcommand{\eps}{\ensuremath{\varepsilon}}
\newcommand{\N}{\ensuremath{\mathbb{N}}}
\newcommand{\R}{\ensuremath{\mathbb{R}}}
\newcommand{\C}{\mathcal{C}}
\newcommand{\HsSet}{\mathcal{H}}
\newcommand{\Hs}{H}
\newcommand{\hp}{h}
\newcommand{\Test}{\ensuremath{\textsc{Test}}}
\newcommand{\dualshell}[1]{\ensuremath{\Pi({#1})}}
\newcommand{\ch}[1]{\ensuremath{\textsc{ch}(#1)}}
\newcommand{\fh}[1]{\ensuremath{\textsc{fh}(#1)}}
\newcommand{\polar}[1]{\ensuremath{{#1}^*}}
\newcommand{\zero}{\ensuremath{\mathbf{o}}}
\newcommand{\polarzero}[1]{\ensuremath{{#1}^*}}
\newcommand{\polarinf}[1]{\ensuremath{{#1}^*}}
\newcommand{\planezero}[1]{\ensuremath{#1_{_\zero}}}
\newcommand{\planeinf}[1]{\ensuremath{#1_{_\infty}}}
\newcommand{\planezeropolar}[1]{\ensuremath{\rho_{_\zero}(#1)}}
\newcommand{\planeinfpolar}[1]{\ensuremath{\rho_{_\infty}(#1)}}
\title{Asymmetric Convex Intersection Testing}
\author{%
  Luis Barba\thanks{%
    Department of Computer Science, ETH Z\"urich, Switzerland, \texttt{luis.barba@inf.ethz.ch}
    }
\and
  Wolfgang Mulzer\thanks{%
    Institut f\"ur Informatik, Freie Universit\"at Berlin,
    \texttt{mulzer@inf.fu-berlin.de}. Supported in part by ERC 
    StG 757609.
    }
}
\begin{document}
\maketitle

\abstract{We consider \emph{asymmetric convex intersection 
testing} (ACIT).  

Let $P \subset \mathbb{R}^d$ be a set of $n$ points 
and $\mathcal{H}$ a set of $n$ halfspaces in $d$ dimensions. 
We denote by $\ch{P}$ the polytope obtained by taking the 
convex hull of $P$, and by $\fh{\mathcal{H}}$ 
the polytope obtained by 
taking the intersection of the halfspaces in $\mathcal{H}$. 
Our goal is to decide whether the 
intersection of $\mathcal{H}$ and the convex hull of $P$ are disjoint. 
Even though ACIT is a natural variant of classic LP-type problems
that have been studied at length in the literature, and despite
its applications in the analysis of high-dimensional data sets, it
appears that the problem has not been studied before. 

We discuss how known approaches can be used to attack the 
ACIT problem, and we provide a very simple strategy that leads
to a deterministic algorithm, linear on $n$ and $m$, whose running time
depends reasonably on the dimension $d$.
}

\section{Introduction}

Let $d \in \N$ be a fixed constant.
Convex polytopes in dimension $d$ can be implicitly represented in two ways, either by its set of vertices, 
or by the set of halfspaces whose intersection defines the polytope. 
A polytope represented by its vertices is usually called a \emph{V-polytope}, while a polytope represented by a set of halfspaces is known as an \emph{H-polytope}. 
Note that the actual complexity of the polytopes can be much larger than the size of their representations~\cite[Theorem~5.4.5]{Matousek02}. 
In this paper, we study the problem of testing the intersection of convex polytopes with different implicit representations.
When both polytopes have the same representation, testing for their intersection reduces to linear programming. 
However, when there is a mismatch in the representation, the problem changes in nature and becomes more challenging.

To formalize our problem, let $P \subset \R^d$ be a set of $n$ points in $\R^d$, and let 
$\HsSet$ be a set of $n$ halfspaces in $\R^d$.\footnote{We will 
assume that both $P$ and $\HsSet$ are in
\emph{general position} (the exact meaning of this will
be made clear later)}
Just as $P$  implicitly defines the polytope $\ch{P}$ obtained by taking the convex hull of $P$, the set $\HsSet$ implicitly defines  
the polytope $\fh{\HsSet}$ obtained by taking the intersection
of the halfspaces in $\HsSet$. In the \emph{asymmetric convex 
intersection problem} (ACIT), our goal is to decide whether the 
intersection of $\HsSet$ and the convex hull of $P$ are disjoint. 

We may assume that $\fh{\HsSet}$ is nonempty. 
Otherwise, ACIT becomes trivial. If $\ch{P}$ and $\fh{\HsSet}$
intersect, we would like to find a \emph{witness point} in 
both $\ch{P}$ and $\fh{\HsSet}$; if not, we would 
like to determine the closest pair between $\ch{P}$
and $\fh{\HsSet}$ and a separating hyperplane.

Even though ACIT seems to be a natural problem that fits well
into the existing work on algorithmic aspects of high-dimensional
polytopes~\cite{BarbaLa15}, we are not aware of any prior work on it. 
While intersection detection of convex polytopes has been a central topic in computational geometry~\cite{edelsbrunner1985computing,muller1978finding,dobkin1985linear,Chazelle92anoptimal,Chazelle1987,ChazelleD80},
when we deal with an intersection test between a V-polytope and an H-polytope, the problem seems to remain unstudied. 
Even the seemingly easy case of this problem in dimension $d=2$ has no trivial solution running in linear time.

The lack of a solution for ACIT may be even more surprising considering that ACIT can be used in the analysis
of high-dimensional data: given a high-dimensional data set, represented
as a point cloud $P$, it is natural to represent the \emph{interpolation}
of the data as the convex hull $\ch{P}$. Then, we would like to
know whether the interpolated data set contains an item that satisfies
certain \emph{properties}. These properties are usually represented as 
linear constraints that must be satisfied, i.e., the data point must belong to the intersection of a set of halfspaces. 
Then, a witness point corresponds to an interpolated data point with the desired properties,
and a separating hyperplane may indicate which properties cannot be fulfilled
by the data at hand.

Even though ACIT has not been addressed before, several 
approaches for related problems\footnote{In particular, checking 
for the intersection of the convex hulls of to $d$-dimensional point sets}
may be used to attack the problem.
The range of techniques goes from simple brute-force, over 
classic linear programming~\cite{Chvatal83}, the theory of LP-type 
problems~\cite{Chazelle01,SharirWe92} (also in 
implicit form~\cite{Chan04}), to parametric search~\cite{Matousek93}. 
In Section~\ref{sec:simple},
we will examine these in more detail and discuss their merits and
drawbacks. Briefly, several of these approaches can be applied to 
ACIT. However, as we will see, it seems hard to get an algorithm
that is genuinely simple and at the same time achieves linear (or almost
linear) running time in the number of points and halfspaces, with a 
reasonable dependency on the dimension $d$.

Thus, in Section~\ref{sec:algorithm},
we present a simple recursive primal-dual pruning strategy that
leads to a deterministic linear time algorithm with a dependence on $d$
that is comparable to the best bounds for linear programming. Even though
the algorithm itself is simple and can be presented in a few lines, the
analysis requires us to take a close look at the polarity transformation
and how it interacts with two disjoint polytopes (Section~\ref{sec:tools}).
Its analysis is also non-trivial and its correctness spans over the entire Section~\ref{sec:Correctnes}.
We believe in the development of simple and efficient methods. The analysis can be complicated, 
but the algorithm must remain simple. 
The simpler the algorithm, the more likely it is to be eventually implemented.

\section{How to solve ACIT with existing tools}\label{sec:simple}

The first thing that might come to mind to solve ACIT is to cast it as a linear program.
This is indeed possible, however the resulting linear program consists of $n$ variables and 
$O(n + m)$ constraints. We want to find a point $x$ 
subject to being inside all halfspaces in $\HsSet$, and being a 
convex combination of all points in $P$. That is, we want 
$x = \sum_{p \in P} \alpha_p\, p$, where 
$\sum_{p \in P} \alpha_p = 1$, and $\alpha_p \geq 0$, for all 
$p \in P$. Moreover, we want that $x \in \Hs$, for all 
$\Hs \in \HsSet$, which can be expressed as $m$ linear inequalities
by looking at the scalar product of $x$ and the normal vectors of the
bounding hyperplanes of the halfspaces.  Because the best combinatorial algorithms
for linear programming provide poor running times when both the 
number of variables and constraints are large, this approach is far
from efficient unless $n$ is really small.\footnote{In fact, in the
traditional computational model of computational geometry, the 
\textsc{Real RAM}~\cite{PreparataSh85}, we cannot solve general 
linear programs in polynomial time, since the best known algorithms
(e.g., ellipsoid, interior point methods)
are only \emph{weakly} polynomial with a running time that depends on
the bit complexity of the input.} 

Another trivial way to solve ACIT, the \emph{brute force} algorithm,
is to compute all facets of $\ch{P}$. That is, we can compute 
$\ch{P}$ explicitly to obtain a set $\HsSet_P$ of the 
$O(n^{\lfloor d/2\rfloor})$ halfspaces with 
$\ch{P} = \fh{\HsSet_P}$~\cite{ClarksonSh89,Chazelle93}. With this 
representation, we can test if $\fh{\HsSet_P}$ and $\fh{\HsSet}$ 
intersect using a general linear program with $d$ variables, or 
compute the distance between $\fh{\HsSet_P}$ and $\fh{\HsSet}$ using
either an LP-type algorithm (see below), or algorithms for convex
quadratic programming~\cite{KapoorVa86,KozlovTaHa80}. The running
time is again quite bad for larger values of $n$ and $d$, since the
size of $\HsSet_P$ might be as high as $\Theta(n^{\lfloor d/2\rfloor})$~\cite[Theorem~5.4.5]{Matousek02}. 

A more clever approach is to use the LP-type framework directly, as described below.

\subsubsection*{The LP-type Framework}

The classic \emph{LP-type framework} that was introduced
by Sharir and Welzl~\cite{SharirWe92} in order to 
extend the notion of low-dimensional linear programming
to a wider range of problems.  An LP-type problem
$(\C, w)$ consists of a set $\C$ of $k$ \emph{constraints} and
a \emph{weight function} $w : 2^{\C} \rightarrow \R$ that assigns
a real-valued weight $w(C)$ to each set $C \subseteq \C$ of 
constraints.\footnote{Actually, we can allow weights from
an arbitrary totally ordered set, but for our purposes, real 
weights will suffice.}
The weight function must satisfy the following three axioms:
\begin{itemize}
  \item \textbf{Monotonicity}: For any set $C \subseteq \C$ of
    constraints and any $c \in \C$, we have
    $w\big(C \cup \{ c \}\big) \leq w(C)$.
  \item \textbf{Existence of a Basis}: There is a 
    constant $\tilde{d} \in \N$ such that for any $C \subseteq \C$,
    there is a subset $B \subseteq C$ with $|B| \leq \tilde{d}$ and
    $w(B) = w(C)$.
  \item \textbf{Locality}: For any $B \subseteq C \subseteq \C$
    with $w(B) = w(C)$ and for any $c \in C$, we have
    that if $w\big(C \cup \{c\}\big) <  w(C)$, then also 
    $w\big(B \cup \{c\}\big) < w(B)$.
\end{itemize}
For $C \subseteq \C$, an inclusion-minimal subset $B \subseteq C$
with $w(B) = w(C)$ is called a \emph{basis} for $C$. 
Solving an LP-type problem $(\C, w)$ amounts to computing 
a basis for $\C$. Many algorithms have been developed for this extension of
linear programming, provided that base cases with a constant number of constraints can be solved in $O(1)$ time.
Seidel proposed a simple randomized algorithm with expected $O(\tilde{d}! k)$ running time~\cite{Seidel91}.
From there, several algorithms have been introduced improving the dependency on $\tilde{d}$ in the running time~\cite{chan2018improved,Clarkson88,Seidel91,SharirWe92}.
The best known randomized algorithm solves LP-type problems in $O(\tilde{d}^2 k + 2^{O(\sqrt{\tilde{d} \log \tilde{d}})})$ time, 
while the best deterministic algorithms have still a running time of the form $O(\tilde{d}^{O(\tilde{d})} k)$.
We would like to obtain an algorithm with a similar running time for ACIT.

\subsubsection*{ACIT as an LP-type problem}

To use these existing machinery, one can try to cast ACIT as an LP-type problem.
To this end, we fix $\HsSet$, and define an LP-type 
problem $(P, w)$ as follows. The \emph{constraints} are modeled 
by the points in $P$. The weight function
$w: 2^P \to \R$ is defined as $w(Q) = d\big(\ch{Q}, \fh{\HsSet}\big)$, 
for any $Q \subseteq P$, where $d(\cdot, \cdot)$ is the 
smallest Euclidean distance between any pair of points from the
two polytopes. It is a pleasant exercise to show that this indeed 
defines an LP-type problem of combinatorial dimension $d$.

Thus, the elegant methods to solve LP-type problems mentioned above become applicable.  
Unfortunately, this does not give an efficient algorithm. This is because 
the set $\HsSet$ remains fixed throughout, making unfeasible 
to solve the base cases consisting of $O(1)$ constraints of $P$ in constant time. 

\subsubsection*{A randomized algorithm for ACIT}
As an extension of the LP-type framework, Chan~\cite{Chan04} introduced a new technique that allows us to deal with certain LP-type problems where the constraints are too numerous to write explicitly, and are instead specified ``implicitly''. 
More precisely, as mentioned above, ACIT can be seen as a linear program, with $m$ constraints coming from $\HsSet$, and $O(n^{\lfloor d/2\rfloor})$ constraints coming from all the facets of $\ch{P}$. The latter are implicitly defined by $P$ using only $n$ points. 
Thus an algorithm capable of solving implicitly defined linear programs would provide a solution for ACIT.
The technique developed by Chan achieves this by using two main ingredients: a decision algorithm, and a partition of the problem into subproblems of smaller size whose recursive solution can be combined to produce the global solution of the problem.
Using the power of randomization and geometric cuttings, this technique leads to a complicated algorithm to solve this implicit linear program, and hence ACIT, in expected $O(d^{O(d)} (n + m))$ time~\cite{ChanPersonal}. 
Besides the complexity of this algorithm, the constant hidden by the big $O$ notation resulting from using this technique seems prohibitive~\cite{louchard2007randomized}. 

In hope of obtaining a deterministic algorithm for ACIT, one can turn to
multidimensional parametric search~\cite{Matousek93} to try de-randomizing the above algorithm.
However, even if all the requirements of this technique can be sorted out, it would lead to a highly complicated algorithm and
polylogarithmic overhead.

In the following sections, we present the first deterministic solution for ACIT using a simple algorithm that overcomes
the difficulties mentioned above. We achieve this solution by diving into the 
intrinsic duality of the problem provided by the polar transformation, while exploiting the LP-type-like structure of our problem.
The resulting algorithm is quite simple, and a randomized version of it could be written with a few lines of code,
provided that one has some LP solver at hand.

\section{Geometric Preliminaries}\label{sec:tools}

Let $\zero$ denote the \emph{origin} of $\R^d$.
A \emph{hyperplane} $\hp$ is a $(d - 1)$-dimensional affine 
space in $\R^d$ of the form
\[
  \hp = \{x \in \R^d \mid \langle z, x\rangle = 1\},
\]
for some $z \in \R^d \setminus \{ \zero \}$, 
where $\langle \cdot, \cdot \rangle$ 
represents the scalar product in $\R^d$. We exclude hyperplanes that 
pass through the origin. A (closed) \emph{halfspace} is the closure 
of the point set on either side of a given hyperplane, i.e., a
halfspace contains the hyperplane defining its boundary.

\subsection{The Polar Transformation}
Given a point $p \in \R^d$, we define the \emph{polar} of $p$ 
to be the hyperplane 
\[
  \polar{p}  = \{ x \in \R^d \mid \langle p, x \rangle = 1\}. 
\]
Given a  hyperplane $\hp$ in $\R^d$, we define its 
\emph{polar} $\polar{\hp} \in \R^d$ as the point with
\[
\hp = \{x \in \R^d \mid \langle x, \polar{\hp}\rangle = 1\}.
\]
Let $\planezeropolar{p} = \{x \in \R^d \mid \langle p, x\rangle \leq 1\}$ 
and $\planeinfpolar{p} = \{x \in \R^d \mid \langle p, x\rangle \geq 1\}$ be 
the two halfspaces supported by $\polar{p}$ such that
$\zero \in \planezeropolar{p}$ and $\zero \notin \planeinfpolar{p}$.
Similarly, $\planezero{\hp}$ and $\planeinf{\hp}$ denote the halfspaces 
supported by $\hp$ such that $\zero \in \planezero{\hp}$ and 
$\zero \notin \planeinf{\hp}$.

Note that the polar of a point $p \in \R^d$ is a hyperplane 
whose polar is equal to $p$, i.e., the polar operation is involutory 
(for more details, see Section~2.3 in Ziegler's book~\cite{Ziegler95}).
The following result is illustrated in 
Figure~\ref{fig:PolarExamples}(a), for $d = 2$.

\begin{lemma}[Lemma 2.1 of~\cite{BarbaLa15}]\label{lem:polarity}
Let $p$ and $\hp$ be a point and a hyperplane in $\R^d$, 
respectively. Then, $p \in \planezero{\hp}$ if and only if 
$\polar{\hp} \in \planezeropolar{p}$. Also, $p \in \planeinf{\hp}$ 
if and only if $\polar{\hp} \in \planeinfpolar{p}$. 
Finally, $p \in \hp$ if and only if $\polar{\hp} \in \polar{p}$.
\end{lemma}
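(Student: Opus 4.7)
The proof is essentially an exercise in unwinding the definitions and exploiting the symmetry of the scalar product. The plan is to rewrite each of the four halfspaces/hyperplanes appearing in the statement as an explicit linear inequality (or equation) in $\langle p, \polar{\hp}\rangle$ and then observe that the same inequality arises from both sides of each claimed equivalence.

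First I would observe that by definition of $\polar{\hp}$, the hyperplane $\hp$ is the zero set of the affine function $x \mapsto \langle x, \polar{\hp}\rangle - 1$. Evaluating this function at $x = \zero$ gives $-1 < 0$, so $\zero$ lies in the halfspace $\{x \in \R^d \mid \langle x, \polar{\hp}\rangle \leq 1\}$. By the convention fixing which halfspace is $\planezero{\hp}$, this gives
\[
  \planezero{\hp} = \{x \in \R^d \mid \langle x, \polar{\hp}\rangle \leq 1\}, \qquad
  \planeinf{\hp} = \{x \in \R^d \mid \langle x, \polar{\hp}\rangle \geq 1\}.
\]
In particular $p \in \planezero{\hp}$ is just the statement $\langle p, \polar{\hp}\rangle \leq 1$, and $p \in \planeinf{\hp}$ is $\langle p, \polar{\hp}\rangle \geq 1$.

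Next I would do the same on the other side. By definition of $\planezeropolar{p}$ and $\planeinfpolar{p}$, membership of $\polar{\hp}$ in these halfspaces reads $\langle p, \polar{\hp}\rangle \leq 1$ and $\langle p, \polar{\hp}\rangle \geq 1$, respectively. Since the scalar product is symmetric, these are identical to the inequalities derived in the previous step, so the first two equivalences follow immediately. For the last statement, $p \in \hp$ is equivalent to $\langle p, \polar{\hp}\rangle = 1$, while $\polar{\hp} \in \polar{p}$ is equivalent to $\langle p, \polar{\hp}\rangle = 1$ by the definition of $\polar{p}$; these are the same.

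There is no real obstacle here: the lemma is a purely definitional translation whose content is that polarity swaps ``point lies in halfspace'' with ``polar point lies in polar halfspace''. The only thing to be careful about is tracking which of the two halfspaces is labelled with the $\zero$ subscript and which with the $\infty$ subscript, which is handled uniformly by the convention that $\zero$-labelled halfspaces contain the origin and are cut out by a ``$\leq 1$'' inequality.
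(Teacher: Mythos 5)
Your proof is correct. The paper does not actually prove this lemma --- it imports it verbatim as Lemma~2.1 of the cited reference --- so there is no in-paper argument to compare against, but your definitional unwinding (rewriting all four membership conditions as the single inequality $\langle p, \polar{\hp}\rangle \lessgtr 1$ and invoking the symmetry of the scalar product, with the origin-evaluation step pinning down which halfspace carries the $\zero$ label) is exactly the standard argument and is complete. The only implicit hypothesis worth acknowledging is that $p \neq \zero$ (so that $\polar{p}$ is genuinely a hyperplane), which is consistent with the paper's convention of excluding hyperplanes through the origin.
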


\begin{figure}[t]
\centering
\includegraphics{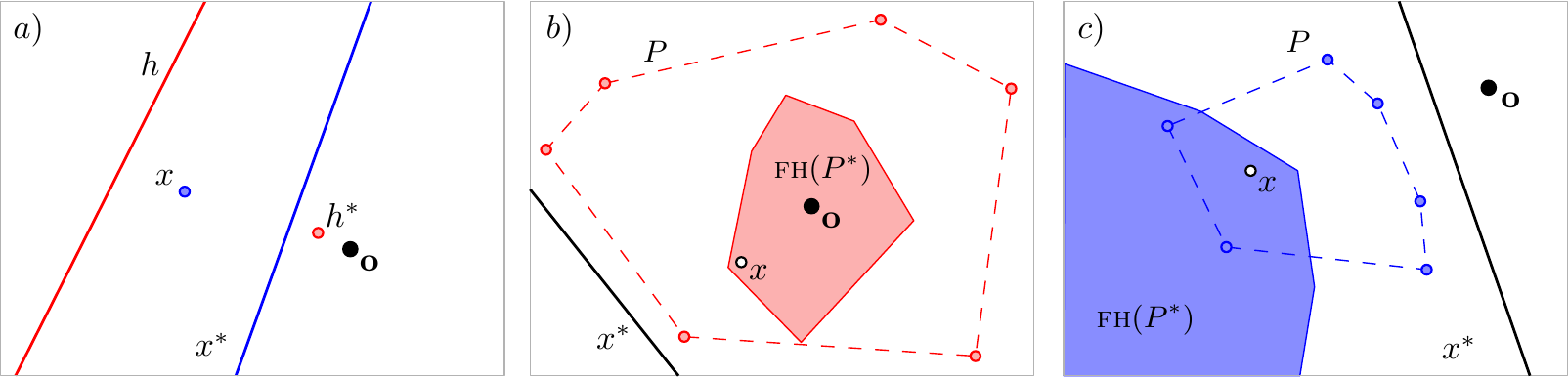}
\caption{\small (a) The situation described in 
Lemma~\ref{lem:polarity}. (b) A valid set $P$ of points that is embracing and its polar $\polar{P}$ that is also embracing. 
(c) A set $P$ that is avoiding and its
polar $\polar{P}$ that is also avoiding.}
\label{fig:PolarExamples}
\end{figure}

Let $P$ be a set of points in $\mathbb{R}^d$.
We say that $P$ is \emph{embracing} if $\zero$ lies in the interior 
of $\ch{P}$. We say that $P$ is \emph{avoiding} if $\zero$ lies in the 
complement of $\ch{P}$. Note that we do not consider point sets 
whose convex hull has $\zero$ on its boundary.
We say that $P$ is \emph{valid} if it is either embracing or avoiding. 

Let $\HsSet$ be a set of halfspaces in $\R^d$ such that 
$\fh{\HsSet}\neq \emptyset$, and the boundary of no halfspace in 
$\HsSet$ contains $\zero$.
We say that $\HsSet$ is \emph{embracing} if  $\zero \in \Hs$ for all 
$\Hs \in \HsSet$ (i.e., $\zero \in \fh{\HsSet}$).
We say that $\HsSet$ is \emph{avoiding} if none of its 
halfspaces contains $\zero$, i.e., 
$\zero \notin \bigcup_{\Hs \in \HsSet} H$.
We say that $\HsSet$ is \emph{valid} if it is either embracing or avoiding.

We now describe how to polarize convex polytopes defined as 
convex hulls of valid sets of points or as intersections of 
valid sets of halfspaces.
Let $\HsSet$ be a valid set of halfspaces in $\R^d$.
To \emph{polarize} $\HsSet$, consider the set of hyperplanes bounding the halfspaces in $\HsSet$, and
let $\polar{\HsSet}$ be the set consisting of all the points being the polars of these hyperplanes. 

\begin{lemma}\label{lemma: zero inside for valid}
Let $\HsSet$ be a valid set of halfspaces in $\R^d$. Then, 
$\polar{\HsSet}$ is embracing if and only if $\HsSet$ is embracing.
\end{lemma}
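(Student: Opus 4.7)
Plan. I would prove both implications directly by examining the embracing and avoiding cases separately.

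For the forward direction, since $\HsSet$ is embracing and no bounding hyperplane passes through $\zero$, each $H \in \HsSet$ must equal $\planezero{h} = \{x : \langle x, \polar{h}\rangle \leq 1\}$, where $h$ is its bounding hyperplane. Consequently $\fh{\HsSet}$ coincides with the standard polar body of the point set $\polar{\HsSet}$, and $\zero$ lies in its interior (being in the interior of every $H$). Applying classical convex-duality---the polar of a bounded convex body containing $\zero$ in its interior is again such a body, and polarity is involutive on this class---I would conclude that $\zero$ lies in the interior of $\ch{\polar{\HsSet}}$, so $\polar{\HsSet}$ is embracing.

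For the converse, I would prove the contrapositive: if $\HsSet$ is avoiding, then $\polar{\HsSet}$ is not embracing. In this case each $H \in \HsSet$ equals $\planeinf{h} = \{x : \langle x, \polar{h}\rangle \geq 1\}$. Picking any $x_0 \in \fh{\HsSet}$ (which exists by the standing assumption $\fh{\HsSet} \neq \emptyset$), I obtain $\langle x_0, a\rangle \geq 1$ for every $a \in \polar{\HsSet}$, and by convexity $\ch{\polar{\HsSet}}$ lies entirely in the closed halfspace $\{y : \langle x_0, y\rangle \geq 1\}$. Since $\langle x_0, \zero\rangle = 0 < 1$, I conclude $\zero \notin \ch{\polar{\HsSet}}$, so $\polar{\HsSet}$ is not embracing. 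Validity of $\HsSet$ then identifies ``not embracing'' with ``avoiding'', closing the equivalence.

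The main obstacle I foresee is the forward direction: invoking standard polar duality to get $\zero$ in the \emph{full-dimensional} interior of $\ch{\polar{\HsSet}}$ really requires that $\fh{\HsSet}$ be bounded, or equivalently that $\polar{\HsSet}$ affinely span $\R^d$; otherwise $\ch{\polar{\HsSet}}$ may be lower-dimensional, as already happens for a single embracing halfspace whose polar is a single point. The ACIT setting---general position together with a sufficient number of halfspaces---should preclude this degeneracy, so the actual proof either uses this implicit assumption or supplies a compensating argument. The avoiding direction, by contrast, is pure linear algebra and goes through without qualification.
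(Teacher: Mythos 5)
Your proof is correct, but it is organized differently from the paper's. For the direction ``$\HsSet$ avoiding $\Rightarrow \polar{\HsSet}$ not embracing,'' the paper instead proves the contrapositive-equivalent statement ``$\polar{\HsSet}$ embracing $\Rightarrow \HsSet$ embracing'' via Carath\'eodory: it extracts $d+1$ polar points whose hull contains $\zero$ and asserts that if the corresponding halfspaces all avoided $\zero$ their intersection would be empty, contradicting validity. Your explicit witness $x_0 \in \fh{\HsSet}$ with $\ch{\polar{\HsSet}} \subseteq \{y : \langle x_0, y\rangle \geq 1\}$ is in fact a cleaner and more complete argument---it proves exactly the claim the paper only asserts, and needs no Carath\'eodory. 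For the other direction the paper argues by contrapositive with a separating hyperplane $\hp$ (so $\polar{\HsSet} \subset \planeinf{\hp}$) and observes that the ray from $\zero$ in direction $-\polar{\hp}$ meets no bounding hyperplane, whence $\fh{\HsSet}$ is unbounded or misses $\zero$; you instead invoke the bipolar theorem directly. Both work, and your flagged obstacle is exactly the crux: the equivalence with ``$\zero$ in the \emph{interior} of $\ch{\polar{\HsSet}}$'' genuinely needs $\fh{\HsSet}$ to be bounded. The paper resolves this by opening its proof with ``recall that $\HsSet$ is embracing if and only if $\fh{\HsSet}$ is bounded and contains $\zero$,'' silently strengthening the definition given earlier in Section 3.1 (which only requires $\zero \in \Hs$ for all $\Hs$). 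So your diagnosis is right: the lemma as stated requires this boundedness convention, and with it your bipolar argument goes through.
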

\begin{proof}
Recall that $\HsSet$ is embracing if and only if $\fh{\HsSet}$ is bounded 
and contains $\zero$.

$\Rightarrow)$. Assume that $\polar{\HsSet}$ is embracing. Thus,  
$\zero \in \ch{\polar{\HsSet}}$. 
In this case, there is a subset $Q$ of $d+1$ points of 
$\polar{\HsSet}$ whose convex hull contains $\zero$, by 
Carath\'eodory's theorem~\cite[Theorem~1.2.3]{Matousek02}. 
Consider all halfspaces of $\HsSet$ whose boundary polarizes to a point in 
$Q$. If none of these halfspaces contains the origin, then their 
intersection has to be empty. This is not allowed by the validity of 
$\HsSet$. Thus, as $\HsSet$ is valid, and as $\HsSet$ cannot avoid the 
origin, we conclude that $\HsSet$ is embracing.

$\Leftarrow)$. For the other direction, assume that 
$\zero \notin \ch{\polar{\HsSet}}$. We want to prove that 
$\HsSet$ is not embracing. For this, let $\hp$ be a hyperplane 
that separates $\zero$ from $\ch{\polar{\HsSet}}$. 
That is, $\polar{\HsSet} \subset \planeinf{\hp}$. 
Lemma~\ref{lem:polarity} implies that the segment $\zero \polar{\hp}$ 
intersects the boundary of each plane in $\HsSet$. Therefore, since the 
ray shooting from $\zero$ in the direction of the vector 
$-\polar{\hp}$ intersects no plane bounding a halfspace in $\HsSet$, 
the polytope $\fh{\HsSet}$ either does not contain the origin or 
is not bounded.  Consequently, $\HsSet$ is not embracing.
\end{proof}

Let $P$ be a valid set of points in $\R^d$.
To \emph{polarize} $P$, let $\dualshell{P}$ be 
the set of hyperplanes polar to the points of $P$.
We have two natural ways of polarizing $P$, 
depending on whether $\zero$ lies in the interior of $\ch{P}$, or in its complement (recall that $\zero$ cannot lie on the boundary of $\ch{P}$). 
If~$\zero \in \ch{P}$, then 
\[
  \polar{P} = \big\{\planezero{\hp} \mid \hp \in \Pi(P)\big\}
\]
is the 
\emph{polarization} of $P$.  Otherwise, if $\zero \notin \ch{P}$, 
then 
\[
  \polar{P} = \big\{\planeinf{\hp} \mid \hp \in \Pi(P)\big\}.  
\]

\begin{lemma}\label{lemma:Valid leads to valid}
Let $P$ be a valid set of points in $\R^d$. Then $\polar{P}$ is valid, 
i.e., $\fh{\polar{P}} \neq \emptyset$ and $\polar{P}$ is either 
embracing or avoiding.
\end{lemma}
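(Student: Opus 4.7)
The plan is to split into the two defining cases of validity for $P$ and verify each requirement on $\polar{P}$ directly, since the definitions themselves essentially dictate the structure of $\polar{P}$.

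\textbf{Embracing case.} Suppose $\zero \in \inter(\ch{P})$. By definition of the polarization in this case, every halfspace of $\polar{P}$ is of the form $\planezero{\hp}$, which contains $\zero$. Hence $\zero \in \fh{\polar{P}}$, so $\fh{\polar{P}} \neq \emptyset$, and every halfspace contains $\zero$; by the definition of ``embracing'' for sets of halfspaces, $\polar{P}$ is embracing. One should also remark that the bounding hyperplane $\polar{p}$ of any halfspace in $\polar{P}$ does not pass through $\zero$, since $\langle p, \zero\rangle = 0 \neq 1$; this takes care of the boundary-avoidance condition in the definition of validity for halfspaces.

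\textbf{Avoiding case.} Suppose $\zero \notin \ch{P}$. Now every halfspace in $\polar{P}$ is of the form $\planeinf{\hp}$, and by construction none of them contains $\zero$. So the only nontrivial thing to show is that $\fh{\polar{P}} \neq \emptyset$; once that is established, the definition of ``avoiding'' gives the claim. For this, I would use a strict separation argument: since $\ch{P}$ is a compact convex set (finite $P$) not containing $\zero$, there is a vector $v \in \R^d$ with $\langle v, p \rangle > 0$ for every $p \in P$. Finiteness of $P$ lets us set $c = \min_{p \in P} \langle v, p\rangle > 0$ and define $x = v/c$. Then $\langle p, x \rangle \geq 1$ for every $p \in P$, which is precisely the condition $x \in \planeinf{\polar{p}}$ for each $p$. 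Hence $x \in \fh{\polar{P}}$, so the intersection is nonempty and $\polar{P}$ is avoiding.

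\textbf{Main obstacle.} The embracing direction is essentially a definition-chase: the halfspaces are chosen to contain $\zero$, so their intersection does as well. The only real content is in the avoiding direction, where the fact that $\polar{P}$ is \emph{valid} (not just has all halfspaces missing $\zero$) requires producing an actual point of $\fh{\polar{P}}$. The separation argument combined with finiteness of $P$ to extract a strictly positive lower bound is the key step; without compactness one could only get $\langle v,p\rangle > 0$ without a uniform gap, which would not suffice to rescale into a point of $\fh{\polar{P}}$. An alternative, if one preferred a more ``symmetric'' proof, would be to try to invoke Lemma~\ref{lemma: zero inside for valid} on $\polar{P}$, but this presumes $\polar{P}$ is already a valid set of halfspaces, so it cannot replace the direct argument here.
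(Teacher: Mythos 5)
Your avoiding case is correct and is essentially the paper's argument: the paper takes a hyperplane $\hp$ separating $\zero$ from $\ch{P}$, so that $P \subset \planeinf{\hp}$, and observes via Lemma~\ref{lem:polarity} that $\polar{\hp} \in \fh{\polar{P}}$; your point $x = v/c$ is exactly such a polar point, constructed by hand. The remark that no bounding hyperplane passes through $\zero$ (since $\langle p, \zero\rangle = 0 \neq 1$) is also fine.

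The gap is in the embracing case, which you dismiss as a definition-chase. In this paper, ``embracing'' for a set of halfspaces is not merely ``every halfspace contains $\zero$'': the proof of Lemma~\ref{lemma: zero inside for valid} explicitly recalls that $\HsSet$ is embracing if and only if $\fh{\HsSet}$ is \emph{bounded} and contains $\zero$, and boundedness is what makes that biconditional (and hence Corollary~\ref{corollary: duality}, which is obtained by applying Lemma~\ref{lemma: zero inside for valid} to $\polar{P}$) true at all --- a single halfspace containing $\zero$ would otherwise count as embracing while its polar is a single point, which is not embracing. Accordingly, the bulk of the paper's proof in this case is devoted to showing that $\fh{\polar{P}}$ is bounded: if it were unbounded, one could take $x \in \fh{\polar{P}}$ with $\lVert x\rVert$ arbitrarily large, so that $P$ lies in the halfspace $\{y : \langle x, y\rangle \le 1\}$ whose boundary is at distance $1/\lVert x\rVert$ from the origin; in the limit $P$ is confined to a closed halfspace with $\zero$ on its boundary, contradicting $\zero \in \inter(\ch{P})$. (A quick quantitative version: $\zero \in \inter(\ch{P})$ gives $B(\zero, r) \subseteq \ch{P}$ for some $r>0$, whence $\lVert x\rVert \le 1/r$ for every $x \in \fh{\polar{P}}$.) Your proof omits this step entirely, so as written it only establishes the weaker, literal reading of ``embracing'' and would not support the later uses of the lemma. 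Your ``main obstacle'' paragraph has the emphasis inverted: nonemptiness in the avoiding case is the easy half, and boundedness in the embracing case is where the work is.
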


\begin{proof}
If $\zero \notin \ch{P}$, then there is a hyperplane $\hp$ such that 
$P\subset \planeinf{\hp}$. Thus, $\polar{\hp}$ belongs to 
$\polarinf{p}$ for every $p\in P$, i.e., $\polar{\hp}\in \fh{\polar{P}}$. 
Thus, $\fh{\polar{P}}$ is nonempty, and none of its halfspaces contains 
the origin by definition. That is, $\polar{P}$ is avoiding.
If $\zero \in \ch{P}$, then $\zero\in \fh{\polar{P}}$ by definition. Thus, 
to show that $\polar{P}$ is embracing, it remains only to show that it is 
bounded. To this end, assume for a contradiction that $\fh{\polar{P}}$ is 
unbounded.  Then, we can take a point $x \in \fh{\polar{P}}$ at arbitrarily 
large distance from $\zero$. 
Thus, $\polar{x}$ is a plane arbitrarily close to $\zero$ such that 
$P \subset \polarzero{x}$.
Therefore, all points of $P$ must lie on a single halfspace that 
contains $\zero$ on its boundary. 
Because $P$ is valid, we know that $\zero$ cannot lie on the boundary of 
$\ch{P}$ and hence, $\zero\notin \ch{P}$---a contradiction with our 
assumption that $\zero \in \ch{P}$. Therefore, $\fh{\polar{P}}$ is 
bounded and hence $\polar{P}$ is embracing. 
\end{proof}

\begin{lemma}\label{lemma: Involutory operation}
Let $P \subset \R^d$ be a valid finite point set in $d$ dimensions, and let 
$\HsSet$ be a valid finite set of halfspaces in $d$ dimensions.
Then the polar operator is involutory:
$P = \polar{(\polar{P})}$ and $\HsSet = \polar{(\polar{\HsSet})}$.
\end{lemma}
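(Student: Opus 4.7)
The plan is to reduce both identities to the elementary fact, recalled right before Lemma~\ref{lem:polarity}, that the polar on individual points and hyperplanes is involutory: $\polar{(\polar{p})}=p$ for every $p\in\R^d\setminus\{\zero\}$, and $\polar{(\polar{\hp})}=\hp$ for every hyperplane $\hp$ not through $\zero$. The only nontrivial bookkeeping is the choice of the side ($\planezero{\cdot}$ versus $\planeinf{\cdot}$) when the polarization produces a halfspace, and this will be settled by Lemmas~\ref{lemma: zero inside for valid} and~\ref{lemma:Valid leads to valid}.

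First I would handle $\HsSet=\polar{(\polar{\HsSet})}$, since this direction is purely at the level of points and hyperplanes. By definition, $\polar{\HsSet}$ is the point set $\{\polar{\hp}\mid \hp\text{ bounds some }\Hs\in\HsSet\}$; polarizing it means taking the bounding hyperplanes of the halfspaces that form $\polar{(\polar{\HsSet})}$. Those bounding hyperplanes are exactly $\{\polar{(\polar{\hp})}\mid \hp\text{ bounds }\Hs\in\HsSet\}=\{\hp\mid \hp\text{ bounds }\Hs\in\HsSet\}$. It remains to see that on each such hyperplane we pick back the original side. This is where Lemma~\ref{lemma: zero inside for valid} enters: $\polar{\HsSet}$ is embracing iff $\HsSet$ is. If $\HsSet$ is embracing, then every $\Hs\in\HsSet$ contains $\zero$, so $\Hs=\planezero{\hp}$, and since $\polar{\HsSet}$ is embracing the polarization prescribes the $\planezero{\cdot}$ side on every bounding hyperplane, recovering $\HsSet$. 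If $\HsSet$ is avoiding, then each $\Hs=\planeinf{\hp}$, and $\polar{\HsSet}$ is avoiding as well, so the polarization prescribes the $\planeinf{\cdot}$ side, again recovering $\HsSet$.

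Next I would address $P=\polar{(\polar{P})}$. By Lemma~\ref{lemma:Valid leads to valid}, $\polar{P}$ is a valid set of halfspaces. Its bounding hyperplanes are, by construction, exactly the polars $\polar{p}$ for $p\in P$. Now polarizing a set of halfspaces is defined (regardless of embracing/avoiding) as taking the polars of the bounding hyperplanes; hence
\[
\polar{(\polar{P})}=\bigl\{\polar{(\polar{p})}\bigm| p\in P\bigr\}=P,
\]
the last equality being the point-hyperplane involutoriness from Ziegler. No side choice is involved here, so nothing else has to be checked.

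The main point to be careful about is the side selection in the halfspace identity: I must verify that in the avoiding case $\polar{\HsSet}$ is genuinely avoiding (not merely non-embracing), so that the rule $\planeinf{\cdot}$ is triggered; this will follow because the general-position hypothesis on $\HsSet$ prevents $\zero$ from lying on the boundary of $\ch{\polar{\HsSet}}$, and the ``only if'' direction of Lemma~\ref{lemma: zero inside for valid} rules out the embracing case. Once this is noted, the two identities become one-line chase diagrams through the definitions.
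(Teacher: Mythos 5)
Your proposal is correct and follows essentially the same route as the paper: the point identity is reduced to the involutoriness of the point--hyperplane polarity, and the halfspace identity is settled by checking the side selection via Lemma~\ref{lemma: zero inside for valid}. Your extra remark about why $\polar{\HsSet}$ is genuinely avoiding (rather than merely non-embracing) in the avoiding case is a valid point of care that the paper glosses over, but it does not change the argument.
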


\begin{proof}
The equality $P = \polar{(\polar{P})}$ follows directly from the 
definition, because the polar operator for points and hyperplanes 
is involutory. For equality $\HsSet = \polar{(\polar{\HsSet})}$, we must 
check that the orientation of the halfspaces is preserved. 
First,  if $\HsSet$ is embracing, i..e, $\zero \in \fh{\HsSet}$, then
every $\Hs \in \HsSet$ is of the form $\Hs = \planezero{\hp}$, for some 
$(d - 1)$-dimensional hyperplane $\hp$. Moreover, Lemma~\ref{lemma: zero inside for valid} implies that 
$\zero \in \ch{\polar{\HsSet}}$. Thus, we have 
$\HsSet = \polar{(\polar{\HsSet})}$ in this case.
Similarly, if $\HsSet$ is avoiding, i.e.,
$\zero \not\in \bigcup_{\Hs \in \HsSet} \Hs$, then 
every $\Hs \in \HsSet$ is of the form $H = \planeinf{\hp}$ for some 
$(d - 1)$-dimensional
hyperplane $\hp$, and by Lemma~\ref{lemma: zero inside for valid},
we have $\zero \not\in \ch{\polar{\HsSet}}$.
Thus, we have again $\HsSet = \polar{(\polar{\HsSet})}$.
\end{proof}

\begin{corollary}\label{corollary: duality}
Let $P$ be a valid set of points in $\mathbb{R}^d$. Then, $\polar{P}$ is embracing if and only if $P$ is embracing. 
Moreover, $\polar{P}$ is avoiding if and only if $P$ is avoiding.
\end{corollary}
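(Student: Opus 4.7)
The plan is to obtain the corollary essentially as a direct consequence of the three preceding results, by applying Lemma~\ref{lemma: zero inside for valid} not to $P$ itself but to the halfspace set $\polar{P}$. First I would verify that this application is legitimate: Lemma~\ref{lemma:Valid leads to valid} tells us that, because $P$ is valid, $\polar{P}$ is a valid set of halfspaces, which is exactly the hypothesis required by Lemma~\ref{lemma: zero inside for valid}.

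Next, I would invoke Lemma~\ref{lemma: zero inside for valid} on $\polar{P}$ to conclude that $\polar{(\polar{P})}$ is embracing if and only if $\polar{P}$ is embracing. At this point the involutory property from Lemma~\ref{lemma: Involutory operation} gives $\polar{(\polar{P})} = P$, so the equivalence becomes: $P$ is embracing if and only if $\polar{P}$ is embracing, which is the first claim of the corollary.

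For the second statement, about avoiding, I would use that by definition a valid set is either embracing or avoiding, and these two possibilities are mutually exclusive (neither a point set nor a halfspace set is allowed to place $\zero$ on the boundary). Since both $P$ and $\polar{P}$ are valid, the negations line up perfectly: $P$ is avoiding iff $P$ is not embracing iff $\polar{P}$ is not embracing iff $\polar{P}$ is avoiding.

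There is no real obstacle here; the whole argument is a one-liner once Lemmas~\ref{lemma: zero inside for valid}, \ref{lemma:Valid leads to valid}, and~\ref{lemma: Involutory operation} are in place. The only point to be careful about is the bookkeeping of which object plays the role of the halfspace set in Lemma~\ref{lemma: zero inside for valid} (namely $\polar{P}$, not $P$), and the reliance on the validity of $\polar{P}$ to ensure that ``not embracing'' can be identified with ``avoiding''.
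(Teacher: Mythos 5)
Your proposal is correct and follows exactly the paper's own argument: apply Lemma~\ref{lemma:Valid leads to valid} to establish validity of $\polar{P}$, invoke Lemma~\ref{lemma: zero inside for valid} with $\polar{P}$ in the role of the halfspace set, simplify via the involution of Lemma~\ref{lemma: Involutory operation}, and dispatch the avoiding case by the dichotomy built into validity. No differences worth noting.
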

\begin{proof}
Because $P$ is valid, $\polar{P}$ is valid by Lemma~\ref{lemma:Valid leads to valid}. 
Therefore, Lemma~\ref{lemma: zero inside for valid} implies that $\polar{P}$ is embracing if and only if $\polar{(\polar{P})}$ is embracing.
Because $P = \polar{(\polar{P})}$ by Lemma~\ref{lemma: Involutory operation}, we conclude that $\polar{P}$ is embracing if and only if $P$ is embracing.
Note that if a valid set $P$ is not embracing, then it is avoiding, yielding the second part of the result.
\end{proof}

\begin{figure}[h]
\centering
\includegraphics{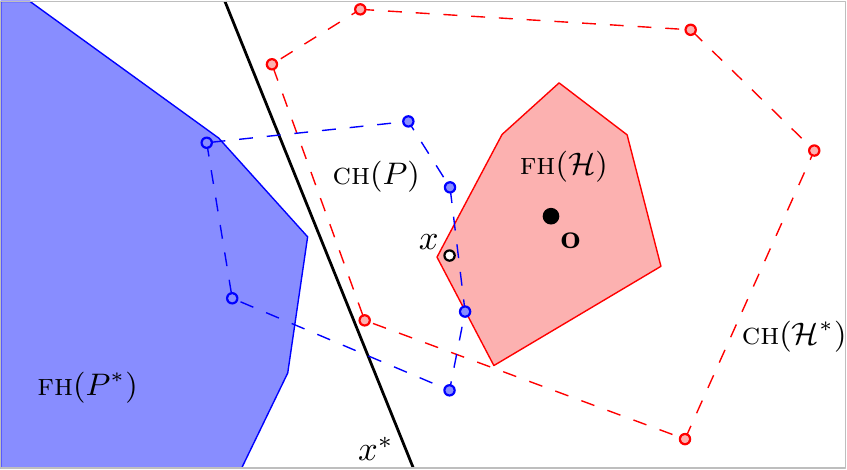}
\caption{\small An example of 
Theorem~\ref{thm:polarPolyhedra} in dimension 2, where a point $x$ lies 
in the intersection of $\ch{P}$ and $\fh{\HsSet}$ if and only if $\polar{x}$ 
separates $\ch{\polar{\HsSet}}$ from $\fh{\polar{P}}$.}
\label{fig:PolarityOfConvexPolyhedra}
\end{figure}

The following result is illustrated in 
Figure~\ref{fig:PolarityOfConvexPolyhedra}, for $d = 2$.

\begin{theorem}[Consequence of 
Theorem 3.1 of~\cite{BarbaLa15}]\label{thm:polarPolyhedra}
Let $P$ be a finite set of points and let $\HsSet$ be a valid finite set 
of halfspaces in $\R^d$ such that either 
\textup{(1)} $P$ is avoiding while $\HsSet$ is embracing, or 
\textup{(2)}~$P$~is embracing while $\HsSet$ is avoiding. 
Then, a point $x$ lies 
in the intersection of $\ch{P}$ and $\fh{\HsSet}$ if and only if 
the hyperplane $\polar{x}$ separates $\fh{\polar{P}}$ from 
$\ch{\polarzero{\HsSet}}$.  Also a hyperplane $\hp$ separates 
$\ch{P}$ from $\fh{\HsSet}$ if and only if the point $\polar{\hp}$ 
lies in the intersection of $\fh{\polar{P}}$ and $\ch{\polar{\HsSet}}$.
\end{theorem}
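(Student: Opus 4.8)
The plan is to reduce the theorem to a pointwise application of the polarity dictionary in Lemma~\ref{lem:polarity}, combined with the structural facts about polarization of valid point sets and halfspace families established in Lemmas~\ref{lemma: zero inside for valid}--\ref{lemma: Involutory operation} and Corollary~\ref{corollary: duality}. I will treat case (1), $P$ avoiding and $\HsSet$ embracing, in detail; case (2) then follows by swapping the roles of $P$ and $\HsSet$ and applying the involutivity of the polar operator (Lemma~\ref{lemma: Involutory operation}), since polarizing turns an embracing point set into an embracing halfspace family and vice versa, so case (2) is literally case (1) read in the polar world.

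First I would fix a point $x \in \R^d$ and unwind what ``$x \in \ch{P} \cap \fh{\HsSet}$'' means in terms of the defining data. Since $P$ is avoiding, $\zero \notin \ch{P}$, so by Lemma~\ref{lemma:Valid leads to valid} the polarization $\polar{P} = \{\planeinf{\hp} \mid \hp \in \dualshell{P}\}$ is an avoiding family of halfspaces; since $\HsSet$ is embracing, $\polar{\HsSet}$ is an embracing point set by Lemma~\ref{lemma: zero inside for valid}, so $\zero \in \inter \ch{\polar{\HsSet}}$. Now $x \in \fh{\HsSet}$ means $x \in \Hs$ for every $\Hs \in \HsSet$; writing $\Hs = \planezero{\hp_\Hs}$ (legitimate because $\HsSet$ is embracing) and applying Lemma~\ref{lem:polarity}, this is equivalent to $\polar{\hp_\Hs} \in \planezeropolar{x}$ for every $\Hs$, i.e.\ every point of $\polar{\HsSet}$ lies in $\planezeropolar{x}$, i.e.\ $\ch{\polar{\HsSet}} \subseteq \planezeropolar{x}$. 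Dually, $x \in \ch{P}$ means $x$ is a convex combination of the points of $P$; I would argue that this is equivalent to $\polar{x}$ being ``above'' (in the $\planeinf{}$ sense) none of the constraints, i.e.\ that $\polar{x}$ does not cut into $\fh{\polar{P}} = \bigcap_{p} \planeinfpolar{p}$ — more precisely, that $\fh{\polar{P}} \subseteq \planeinfpolar{x}$. The cleanest way to see this is again through Lemma~\ref{lem:polarity}: $x \in \ch P$ iff every closed halfspace containing $P$ also contains $x$; a closed halfspace not through $\zero$ containing $P$ is exactly one whose bounding hyperplane $\hp$ has $P \subseteq \planeinf{\hp}$, equivalently $\polar{\hp} \in \fh{\polar P}$, and ``$\hp$ contains $x$ on the far side'' translates via the lemma to $\polar{\hp} \in \planeinfpolar{x}$; handling also the halfspaces through $\zero$ by a limiting argument (or by noting $\zero\notin\ch P$ so they are not needed) gives $x \in \ch P \iff \fh{\polar P} \subseteq \planeinfpolar{x}$.

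Combining the two equivalences, $x \in \ch{P} \cap \fh{\HsSet}$ holds if and only if $\ch{\polar{\HsSet}} \subseteq \planezeropolar{x}$ and $\fh{\polar{P}} \subseteq \planeinfpolar{x}$, which says precisely that the hyperplane $\polar{x}$ has $\ch{\polar{\HsSet}}$ on its $\zero$-side and $\fh{\polar{P}}$ on its non-$\zero$-side — that is, $\polar{x}$ separates $\fh{\polar{P}}$ from $\ch{\polar{\HsSet}}$. (Here I should note that $\fh{\polar{P}}$ is unbounded and $\ch{\polar{\HsSet}}$ contains $\zero$ in its interior, so the two bodies are genuinely on opposite sides and the separation is nontrivial; one direction also needs the remark that if $\polar x$ separates them, then in particular $\zero$, lying in $\ch{\polar\HsSet}\subseteq\planezeropolar x$, sits on the $\zero$-side, which is automatic and consistent.) For the second assertion, a hyperplane $\hp$ separates $\ch{P}$ from $\fh{\HsSet}$ iff $P \subseteq \planeinf{\hp}$ and $\fh{\HsSet} \subseteq \planezero{\hp}$ (orienting $\hp$ so the origin-containing polytope $\fh\HsSet$ is on the $\zero$-side); translating the first condition by Lemma~\ref{lem:polarity} gives $\polar{\hp} \in \fh{\polar{P}}$, and translating the second, exactly as above, gives $\polar{\hp} \in \ch{\polar{\HsSet}}$, so $\polar{\hp} \in \fh{\polar{P}} \cap \ch{\polar{\HsSet}}$, and conversely.

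The main obstacle I anticipate is not any single inequality but the careful bookkeeping of orientations: making sure that the halfspaces in $\HsSet$ really can be written as $\planezero{\cdot}$ (needs embracing) or $\planeinf{\cdot}$ (needs avoiding), that the correct variant of $\polar{P}$ (the $\planeinf{}$ version, since $P$ is avoiding) is being used, and that ``separates'' is interpreted with the origin consistently on one prescribed side throughout. A secondary technical point is the passage from ``$x$ is a convex combination of $P$'' to ``every valid halfspace containing $P$ contains $x$'': this is the standard separation theorem for $\ch P$, but one must handle the hyperplanes through $\zero$ (which are excluded from our polar framework) either by excluding them harmlessly using $\zero \notin \ch P$, or by a density/limiting argument — I would take the former route since it is cleaner given validity. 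Once orientations are pinned down, everything else is a direct substitution via Lemma~\ref{lem:polarity}, and case (2) is obtained for free by polar symmetry using Lemmas~\ref{lemma: Involutory operation} and Corollary~\ref{corollary: duality}.
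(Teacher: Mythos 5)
The paper never proves this theorem --- it is imported wholesale as a ``consequence of Theorem~3.1 of~\cite{BarbaLa15}'' --- so there is no internal argument to compare yours against; I can only judge your proof on its own terms, and it has a genuine gap. The problem is your second pointwise equivalence, $x \in \ch{P} \iff \fh{\polar{P}} \subseteq \planeinfpolar{x}$: only the forward implication is correct. Your justification of the converse asserts that a closed halfspace containing $P$ whose boundary avoids $\zero$ ``is exactly one whose bounding hyperplane $\hp$ has $P \subseteq \planeinf{\hp}$.'' That is false: such a halfspace can equally well be $\planezero{\hp}$, the side containing the origin, and since $P$ is avoiding, the halfspaces needed to witness $x \notin \ch{P}$ may all be of this second kind. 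Your fallback of discarding hyperplanes through $\zero$ does not help, because these offending halfspaces do not pass through $\zero$ --- they contain it in their interior. Concretely, take $d=1$, $P=\{2,3\}$, $x=5$: then $\fh{\polar{P}} = [1/2,\infty)$ and $\planeinfpolar{x} = [1/5,\infty)$, so $\fh{\polar{P}} \subseteq \planeinfpolar{x}$, yet $x \notin \ch{P}=[2,3]$. In general $\{x : \fh{\polar{P}} \subseteq \planeinfpolar{x}\}$ is the intersection of only the origin-avoiding halfspaces containing $P$, a strict superset of $\ch{P}$ (here $[2,\infty)$).

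This is not a local slip you can patch within the same scheme: adding the other condition does not save it, since with $\HsSet = \{(-\infty,10],[-10,\infty)\}$ the same $x=5$ also satisfies $\ch{\polar{\HsSet}} = [-1/10,1/10] \subseteq \planezeropolar{x}$, so $\polar{x}$ separates $\fh{\polar{P}}$ from $\ch{\polar{\HsSet}}$ even though $x \notin \ch{P}\cap\fh{\HsSet}$. (This actually shows the literal pointwise ``if'' direction of the statement, read with your --- and the natural --- meaning of ``separates,'' cannot be established this way; what the algorithm genuinely needs, and what one should aim to prove, is the disjointness duality $\ch{P}\cap\fh{\HsSet}\neq\emptyset$ iff $\fh{\polar{P}}\cap\ch{\polar{\HsSet}}=\emptyset$, together with the witness correspondences in the two directions that do hold, namely your forward implications and the second assertion.) A secondary issue: your translation of $\fh{\HsSet}\subseteq\planezero{\hp}$ into $\polar{\hp}\in\ch{\polar{\HsSet}}$ is not ``exactly as above'' --- the nontrivial direction requires a separation argument exploiting $\zero\in\inter \ch{\polar{\HsSet}}$, not a termwise application of Lemma~\ref{lem:polarity}; that part is true but needs its own proof.
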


Conditions~(1) and (2) will be crucial in our algorithm. 
Note that by Corollary~\ref{corollary: duality}, we have that $P$ and $\HsSet$ satisfy condition~(1), then the point set 
$\polar{\HsSet}$ and the set $\polar{P}$ of halfspaces satisfy 
condition~(2), and vice versa.

\subsection{Conflict Sets, $\eps$-nets, and Closest Pairs}

Let $P \subseteq \R^d$ be a finite point set in $d$ dimensions, 
and let $\Hs$ be a halfspace in $\R^d$. We say that a point 
$p \in P$ \emph{conflicts} with $\Hs$ if $p \in \Hs$. The 
\emph{conflict set} of $P$ and $\Hs$, denoted $V_\Hs(P)$, 
consists of all points $p \in P$ that are in conflict with $\Hs$,
i.e., $V_\Hs(P) = P \cap \Hs$.
Let $\eps \in (0,1)$ be a parameter. A set $N \subseteq P$ is called 
an \emph{$\eps$-net} for $P$ if for every halfspace $\Hs$ in 
$\R^d$, we have \begin{equation}\label{equ:eps_net}
V_H(N) = \emptyset \Rightarrow \big|V_H(P)\big| < \eps |P|.
\end{equation}
By the classic 
\emph{$\eps$-net theorem}~\cite[Theorem~5.28]{HarPeled11}, a random
subset $N \subset P$ of size 
$\Theta\Big(\eps^{-1} \log\big(\eps^{-1} + \alpha^{-1}\big)\Big)$ 
is an $\eps$-net for $P$ with probability at least $1 - \alpha$.
For a deterministic algorithm running in linear time, we can 
compute such a net using the complicated algorithm of Chazelle and 
Matou\v{s}ek~\cite[Chapter~4.3]{Chazelle01} or the much simpler 
algorithm introduced by Chan~\cite{chan2018improved}.
See the textbooks of Matou\v{s}ek~\cite{Matousek02},
Chazelle~\cite{Chazelle01}, or Har-Peled~\cite{HarPeled11} for more 
details on $\eps$-nets and their uses in computational geometry.
The following observation shows the usefulness of conflict sets 
for our problem.

\begin{lemma}\label{lem:cp_violator_primal}
Let $P \subseteq \R^d$ be a finite point set 
and $\HsSet$ a finite set of halfspaces in $d$ dimensions.
Let $N \subseteq P$ such that $\fh{\HsSet}$ and $\ch{N}$ are disjoint, 
and let $x, y$ be the closest pair between them, such that
$x \in \fh{\HsSet}$ and $y \in \ch{N}$. Let $\Hs_y$ be the halfspace 
through $y$ perpendicular to the segment $xy$, containing 
$\fh{\HsSet}$. Then, we have
$d\big(\fh{\HsSet}, P\big) < d\big(\fh{\HsSet}, N\big)$ if and only if 
$V_{\Hs_y}(P) \neq \emptyset$.
\end{lemma}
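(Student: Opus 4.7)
The plan is to combine the closest-pair geometry with a first-order perturbation argument. Let $h := \partial \Hs_y$ be the bounding hyperplane of $\Hs_y$; it passes through $y$ and is perpendicular to $x - y$. The closest-pair property of $(x,y)$ yields two parallel supporting hyperplanes: $h$ supports $\ch{N}$ at $y$ with $\ch{N}$ contained in the closed halfspace opposite to $\Hs_y$, and the parallel hyperplane through $x$ supports $\fh{\HsSet}$ at $x$ on the side away from $y$. In particular, every $w \in \fh{\HsSet}$ satisfies $d(w, h) \geq \|x - y\|$.

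For the forward direction, I argue the contrapositive. Assume $V_{\Hs_y}(P) = \emptyset$, so $P \subseteq \R^d \setminus \Hs_y$, an open halfspace; by convexity $\ch{P}$ lies in this open halfspace as well. For any $z \in \ch{P}$ and $w \in \fh{\HsSet}$, the two points lie on opposite sides of $h$, so projecting the segment $[z,w]$ onto the normal of $h$ yields
\[
  \|z - w\| \;\geq\; d(z, h) + d(w, h) \;\geq\; 0 + \|x - y\|.
\]
Hence $d(\ch{P}, \fh{\HsSet}) \geq \|x - y\|$. Combined with $\ch{N} \subseteq \ch{P}$, which gives the reverse inequality $d(\ch{P}, \fh{\HsSet}) \leq d(\ch{N}, \fh{\HsSet}) = \|x - y\|$, we obtain equality, contradicting strict decrease.

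For the backward direction, pick $p \in P \cap \Hs_y$; by the paper's general-position assumption we may take $p \in \inter(\Hs_y)$, \ie $\langle x - y,\, p - y\rangle > 0$. Parametrise the segment from $y$ to $p$ by $y_t := (1-t) y + t p \in \ch{P}$ for $t \in [0,1]$ and expand
\[
  \|y_t - x\|^2 \;=\; \|x - y\|^2 \;-\; 2t\,\langle x - y,\, p - y\rangle \;+\; t^2 \|p - y\|^2.
\]
The linear term is strictly negative, so $\|y_t - x\| < \|x - y\|$ for all sufficiently small $t > 0$. Since $y_t \in \ch{P}$ and $x \in \fh{\HsSet}$, this gives $d(\ch{P}, \fh{\HsSet}) \leq \|y_t - x\| < \|x - y\| = d(\ch{N}, \fh{\HsSet})$.

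The main subtle step is the forward direction: one has to recognise that the closest pair produces a \emph{strip} separation between $\fh{\HsSet}$ and $\ch{P}$ of exact width $\|x - y\|$, not merely a single point-to-polytope bound, and only this strip structure delivers the matching lower bound. The backward direction is then a routine first-order perturbation, whose only delicate point is the use of general position to upgrade $p \in \Hs_y$ to $p \in \inter(\Hs_y)$ so that the linear coefficient is strictly negative.
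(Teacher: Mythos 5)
Your proof is correct and follows essentially the same route as the paper: the forward direction is the paper's observation that everything outside $\Hs_y$ lies beyond the width-$\|x-y\|$ slab separating $\fh{\HsSet}$ from $h$, and your backward direction is just the paper's ``the angle $\angle pyx$ is acute, so $d(x, py) < \|x-y\|$'' argument written out as an explicit first-order expansion of $\|y_t - x\|^2$. Both steps check out, including the use of general position to get $p \in \inter(\Hs_y)$.
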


\begin{proof}
Since all points in $\R^d \setminus \Hs_y$ have distance larger than 
$d\big(\fh{\HsSet}, N\big)$ from $\fh{\HsSet}$, the implication
$V_{\Hs_y}(P) = \emptyset \Rightarrow 
d\big(\fh{\HsSet}, P\big) < d\big(\fh{\HsSet}, N\big)$ is immediate.

Now assume that $V_{\Hs_y}(P) \neq \emptyset$, say, 
$p \in V_{\Hs_y}(P)$. Then, the line segment $py$ is contained 
in $\ch{P}$, and since $p \in V_{\Hs_y}(P)$ and since $p$ does not
lie on the boundary of $\Hs_y$ be our general position assumption, 
it follows that the 
angle between the segments $py$ and $xy$ is strictly smaller than
$\pi/2$.
Hence, we have 
\[
  d\big(\fh{\HsSet}, P\big) \leq d(x, py) < \big(\fh{\HsSet}, N\big),
\]
as claimed.
\end{proof}

Similarly, let $\HsSet$ be a finite set of halfspaces in 
$d$ dimensions, and let $p \in \R^d$ be a point. The \emph{conflict 
set} $V_p(\HsSet)$ of $\HsSet$ and $p$ consists of all halfspaces 
that do not contain $p$, i.e. $V_p(\HsSet) = \{ \Hs \in \HsSet \mid 
p \not\in \Hs \}$. We have the following polar version of 
Lemma~\ref{lem:cp_violator_primal}:

\begin{lemma}\label{lem:cp_violator_polar}
Let $P \subseteq \R^d$ be a finite point set 
and $\HsSet$ a finite set of halfspaces in $d$ dimensions.
Let $\HsSet' \subseteq \HsSet$ such that 
$\fh{\HsSet'}$ and $\ch{P}$ are disjoint, 
and let $x, y$ be the closest pair between them, such that
$x \in \fh{\HsSet'}$ and $y \in \ch{P}$. 
Then, we have
$d\big(\fh{\HsSet}, P\big) > d\big(\fh{\HsSet'}, P\big)$ if and only if 
$V_{x}(\HsSet) \neq \emptyset$.
\end{lemma}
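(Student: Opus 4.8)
The plan is to mirror the proof of Lemma~\ref{lem:cp_violator_primal}, exploiting the symmetry between the roles of points and halfspaces. We are given $\HsSet' \subseteq \HsSet$ with $\fh{\HsSet'}$ disjoint from $\ch{P}$, with closest pair $x \in \fh{\HsSet'}$, $y \in \ch{P}$. Note that shrinking the set of halfspaces enlarges the polytope, so $\fh{\HsSet} \subseteq \fh{\HsSet'}$, and hence $d(\fh{\HsSet}, P) \geq d(\fh{\HsSet'}, P)$; the question is exactly when this inequality is strict.

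First I would handle the easy implication: suppose $V_x(\HsSet) = \emptyset$, i.e.\ $x \in \Hs$ for every $\Hs \in \HsSet$, so $x \in \fh{\HsSet}$. Since $x$ is already in $\fh{\HsSet'} \supseteq \fh{\HsSet}$ and realizes the minimum distance to $\ch{P}$ together with $y$, the point $x$ also lies in $\fh{\HsSet}$ and therefore $d(\fh{\HsSet}, P) \le d(x, y) = d(\fh{\HsSet'}, P)$. Combined with the reverse inequality from containment, we get equality, so $d(\fh{\HsSet}, P) > d(\fh{\HsSet'}, P)$ fails. This gives the contrapositive of one direction.

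For the other direction, suppose $V_x(\HsSet) \neq \emptyset$, say $\Hs \in \HsSet$ with $x \notin \Hs$. The idea is to show that the closest point of $\fh{\HsSet}$ to $\ch{P}$ is strictly farther than $d(x,y)$. Let $\Hs_x$ be the halfspace through $x$ perpendicular to the segment $xy$ and containing $\ch{P}$ (the analogue of $\Hs_y$ but on the $x$-side); since $x,y$ is the closest pair between $\fh{\HsSet'}$ and $\ch{P}$, every point of $\fh{\HsSet'}$ lies in the complementary halfspace $\R^d \setminus \Hs_x$ (strictly on the far side of the hyperplane through $x$), so all of $\fh{\HsSet'}$ is at distance $\ge d(x,y)$ from $\ch{P}$ with equality only at $x$. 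Now I want to argue that $\fh{\HsSet} = \fh{\HsSet'} \cap \Hs$ cannot contain $x$ and, more importantly, cannot contain any point at distance exactly $d(x,y)$ from $\ch{P}$: such a point would have to equal $x$ (by uniqueness of the closest point on a convex set, or by strict convexity of the distance-to-$\ch{P}$ function along segments), but $x \notin \Hs \supseteq \fh{\HsSet}$. Since $\fh{\HsSet}$ is closed, nonempty (by the standing assumption $\fh{\HsSet} \neq \emptyset$), and contained in $\fh{\HsSet'}$ which has a unique nearest point $x$ to $\ch{P}$ that $\fh{\HsSet}$ avoids, the distance from $\fh{\HsSet}$ to $\ch{P}$ is strictly greater than $d(x,y) = d(\fh{\HsSet'}, P)$.

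The main obstacle is making the strictness argument fully rigorous: I need that the function $z \mapsto d(z, \ch{P})$ restricted to the convex set $\fh{\HsSet'}$ attains its minimum $d(x,y)$ \emph{only} at $x$, so that excising $x$ (as the constraint $\Hs$ does, since $x \notin \Hs$) forces a strictly larger minimum over the still-compact-enough feasible region. This follows because $z \mapsto d(z, \ch{P})$ is convex, and on the segment from $x$ to any other point $z' \in \fh{\HsSet'}$ it is strictly increasing near $x$ whenever $z'$ is not in the nearest-point fibre over $y$; the general-position assumption rules out a positive-dimensional fibre, so $x$ is the unique minimizer. One then notes $\fh{\HsSet}$ is closed and disjoint from the single minimizer $x$, and a standard compactness/continuity argument (intersect with a large ball) yields the strict inequality $d\big(\fh{\HsSet}, P\big) > d\big(\fh{\HsSet'}, P\big)$, as claimed.
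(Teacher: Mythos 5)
Your proof is correct and follows essentially the same route as the paper: the easy direction via $x \in \fh{\HsSet} \subseteq \fh{\HsSet'}$, and the converse by noting that, by general position, $x$ is the \emph{unique} minimizer of the distance to $\ch{P}$ over $\fh{\HsSet'}$, while the violated halfspace $\Hs$ excludes $x$ from $\fh{\HsSet}$, forcing a strictly larger distance. You spell out the compactness/uniqueness details that the paper's two-line argument leaves implicit, but the underlying idea is identical.
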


\begin{proof}
First, if $V_x(\HsSet) = \emptyset$, then $x \in \fh{\HsSet}$, and 
since $\fh{\HsSet} \subseteq \fh{\HsSet'}$, we have 
$d\big(\fh{\HsSet}, P\big) = d\big(\fh{\HsSet'}, P\big)$.

Second, suppose that 
$V_x(\HsSet) \neq \emptyset$, say, $\Hs \in V_x(\HsSet)$. Then, 
$x \not\in \Hs$, and since, by gneral position, $x$ is the unique point 
in $\fh{\HsSet'}$ with $d(x, \ch{P}) = d(\fh{\HsSet'}, \ch{P})$,
we have
\[
  d\big(\fh{\HsSet}, P\big) \geq 
  d\big(\fh{\HsSet' \cup \{\Hs\}}, P\big)
  > d\big(\fh{\HsSet'}, P\big),
\]
as claimed.
\end{proof}

The following lemma gives a polar meaning to the notion of $\eps$-nets.

\begin{lemma}\label{lem:dualNet}
Let $N \subseteq P$ be an $\eps$-net for $P$ such that
if $\zero \in \ch{P}$, then also $\zero \in \ch{N}$.
For any point $x \in \R^d$, it 
holds that if $x \in \fh{\polar{N}}$, then 
$\big|V_x\big(\polar{P}\big)\big| \leq \eps|P|$.
\end{lemma}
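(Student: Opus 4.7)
The plan is to unfold the definition of the polar transform to translate the statement $x \in \fh{\polar{N}}$ into a system of linear inequalities on the points of $N$, translate $|V_x(\polar{P})|$ into the size of a conflict set in the primal, and then apply the $\eps$-net property directly. The cases $\zero \in \ch{P}$ and $\zero \notin \ch{P}$ have to be treated separately, since the two cases use different halfspace orientations in the definition of $\polar{P}$, but they are entirely symmetric and both reduce to the same combinatorial argument.

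First, I would split into cases according to the validity of $P$. Suppose that $P$ is embracing, so $\zero \in \ch{P}$; by hypothesis $\zero \in \ch{N}$ as well, so $N$ is embracing too. Then for each $p \in P$ we have $\polar{p} \in \polar{P}$ realized as $\planezeropolar{p} = \{y \in \R^d \mid \langle p, y\rangle \leq 1\}$, and similarly for $N$. The condition $x \in \fh{\polar{N}}$ unfolds to $\langle p, x\rangle \leq 1$ for every $p \in N$, while a halfspace $\planezeropolar{p} \in \polar{P}$ lies in $V_x(\polar{P})$ exactly when $\langle p, x\rangle > 1$.

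Now define the primal halfspace
\[
  \Hs = \{ y \in \R^d \mid \langle y, x\rangle > 1 \}.
\]
(Strict inequality is harmless, since by our general position assumption no point of $P$ lies on the boundary; alternatively one may work with the closed halfspace.) Then the map $p \mapsto \planezeropolar{p}$ sets up a bijection between $V_\Hs(P) = P \cap \Hs$ and $V_x(\polar{P})$, and likewise between $V_\Hs(N)$ and the set of halfspaces of $\polar{N}$ not containing $x$. Since $x \in \fh{\polar{N}}$, the latter is empty, so $V_\Hs(N) = \emptyset$. The $\eps$-net property of $N$ applied to $\Hs$ then yields $|V_\Hs(P)| < \eps |P|$, so $|V_x(\polar{P})| < \eps|P|$, as claimed.

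The case $\zero \notin \ch{P}$ is completely analogous: now $\polar{P}$ uses the halfspaces $\planeinfpolar{p} = \{y \mid \langle p, y\rangle \geq 1\}$, so $x \in \fh{\polar{N}}$ becomes $\langle p, x\rangle \geq 1$ for all $p \in N$, and a halfspace of $\polar{P}$ fails to contain $x$ precisely when $\langle p, x\rangle < 1$. The corresponding primal halfspace is $\Hs = \{y \mid \langle y, x\rangle < 1\}$, and the same bijection and $\eps$-net argument finishes the proof. The only subtle point, and the reason the hypothesis on $N$ is phrased the way it is, is that in the embracing case we need $\zero \in \ch{N}$ to guarantee that $\polar{N}$ is defined with the same halfspace orientation as $\polar{P}$; in the avoiding case $\ch{N} \subseteq \ch{P}$ automatically excludes the origin, and no extra assumption is needed.
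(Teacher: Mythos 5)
Your proof is correct and follows essentially the same route as the paper's: split on whether $P$ is embracing or avoiding, translate the condition $x \in \fh{\polar{N}}$ into emptiness of the conflict set $V_\Hs(N)$ for the primal halfspace dual to $x$, and invoke the $\eps$-net property; the paper phrases the translation via Lemma~\ref{lem:polarity} where you unfold the scalar products directly, and both arguments quietly rely on general position to handle points on the boundary hyperplane $\polar{x}$.
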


\begin{proof}
First, suppose that $\zero \in \ch{P}$, then, we have $\zero \in \ch{N}$,
and hence $\zero \in \fh{\polar{N}}$. Since $x \in \fh{\polar{N}}$,
we have $x \in \planezeropolar{p}$, for all $p \in N$. 
By Lemma~\ref{lem:polarity},
we get $p \in \planezeropolar{x}$, for all $p \in N$, so 
$N \cap \planeinfpolar{x} = \emptyset$. Since $N$ is an $\eps$-net for $P$, 
we conclude $|P \cap \planeinfpolar{x}| \leq \eps |P|$. The claim now
follows, because by Lemma~\ref{lem:polarity}, we have
$|P \cap \planeinfpolar{x}| = |V_x\big(\polar{P})|$.

Second, suppose that $\zero \not\in \ch{P}$, then, we also
get $\zero \not\in \ch{N}$,
and hence $\zero \not\in \bigcup_{\Hs \in \polar{N}} \Hs$. 
Since $x \in \fh{\polar{N}}$,
we have $x \in \planeinfpolar{p}$, for all $p \in N$. 
By Lemma~\ref{lem:polarity},
we get $p \in \planeinfpolar{x}$, for all $p \in N$, so 
$N \cap \planezeropolar{x} = \emptyset$. Since $N$ is an $\eps$-net for $P$, 
we conclude $|P \cap \planezeropolar{x}| \leq \eps |P|$. The claim now
follows, because by Lemma~\ref{lem:polarity}, we have
$|P \cap \planezeropolar{x}| = |V_x\big(\polar{P})|$.
\end{proof}

\section{A Simple Algorithm}\label{sec:algorithm}

Let $P$ be a valid set of $n$ points and let $\HsSet$ be a valid set of 
$m$ halfspaces in $\R^d$ such that either 
\textup{(1)} $P$ is avoiding while $\HsSet$ is embracing, or 
\textup{(2)}~$P$~is embracing while $\HsSet$ is avoiding. 
We first present a 
slightly more restrictive 
algorithm that requires conditions (1) or (2) to hold. 
We spend the next few sections proving its correctness and 
running time, and then we extend it to a general algorithm for 
the ACIT problem.

\subsection{Description of the Algorithm}

Our algorithm $\Test(P, \HsSet)$ takes $P$ and $\HsSet$ as input,
such that either~(1) or~(2) is satisfied, and it computes either 
the closest pair between $\ch{P}$ and $\fh{\HsSet}$, if $\ch{P}$ and
$\fh{\HsSet}$ are disjoint, or the closest pair between $\ch{\polar{H}}$ and 
$\fh{\polar{P}}$, if $\ch{P}$ and $\fh{\HsSet}$ intersect. 
By Theorem~\ref{thm:polarPolyhedra}, this is always possible.

The algorithm is recursive.
Let $\alpha = c\, d^4 \log d$, for some
approriate constant $c > 0$.
For the base case, if both $|P|, |\HsSet| \leq \alpha$,
we apply the brute force algorithm: we explicity compute 
the polytope $\ch{P}$ to obtain the set $\HsSet_P$ of hyperplanes
with $\ch{P} = \fh{\HsSet_P}$, and we use a classic LP-type algorithm
to find the closest pair between $\ch{P}$ and $\fh{\HsSet}$ or
between $\fh{\polar{P}}$ and $\ch{\polar{\HsSet}}$.
Otherwise, we compute a $(1/d^4)$-net $N \subseteq P$, 
and if necessary, we add
$d + 1$ points to $N$ such that if $\zero \in \ch{P}$, then
$\zero \in \ch{N}$. These $d + 1$ points can be found in $O(n)$ time 
using basic linear algebra. Then, we execute the following 
loop.

\noindent
\textbf{Repeat $2d+1$ times:} Recursively call 
$\Test(\polar{\HsSet}, \polar{N})$; there are two possibilities.

\noindent\textbf{Case 1:} 
$\ch{\polar{\HsSet}}$ and $\fh{\polar{N}}$ are disjoint. Then,
$\Test(\polar{\HsSet}, \polar{N})$ returns the closest pair 
$x$, $y$, with $x \in \ch{\polar{\HsSet}}$ and $y \in \fh{\polar{N}}$ (unique by our general position assumption).
Let $V_y \subset \polar{P}$ be conflict set of $\polar{P}$ 
and $y$. If $V_y = \emptyset$, then 
report that $\ch{P}$ and $\fh{\HsSet}$ intersect, 
and output $x$, $y$ as the polar witness.
Otherwise, add to $N$ all elements of $\polar{V_y}$, 
and continue with the next iteration.

\noindent\textbf{Case 2:}
$\ch{\polar{\HsSet}}$ and 
$\fh{\polar{N}}$ intersect. Then,
$\Test(\polar{\HsSet}, \polar{N})$ returns the closest pair 
$x$, $y$ between $\fh{\polar{(\polar{\HsSet})}} = \fh{\HsSet}$ and 
$\ch{\polar{(\polar{N})}} = \ch{N}$, with $x \in \fh{\HsSet}$
and $y \in \ch{N}$.  Let $\Hs$ be the halfspace that contains 
$\fh{\HsSet}$ supported by the normal hyperplane of $xy$ through 
$y$. Let $V_\Hs$ be the conflict set of $P$ and $\Hs$.
If $V_\Hs = \emptyset$, then report that 
$\ch{P}$ and $\fh{\HsSet}$ are disjoint, and output $x$, $y$ 
as the witness. Otherwise, add to $N$ all elements of 
$V_H$ and continue with the next iteration.

If the loop terminates without a result, the algorithm finishes and returns an \textsc{Error}.

\subsection{Running Time}

While at this point we have no idea why $\Test(P, \HsSet)$
works, we can start by analyzing its running time. 
In the base case, when both $P$ and $\HsSet$ have of at most 
$\alpha$ elements, we can compute $\ch{P}$ explicitly to obtain the 
$O\Big(\alpha^{\lfloor d/2\rfloor}\Big)$ halfspaces of 
$\HsSet_P$. We can do this in a brute force manner by trying all 
$d$-tuples of $P^d$ and checking whether all of $P$ is 
on the same side of the hyperplane spanned by a given tuple, 
or we can run a convex-hull algorithm~\cite{ClarksonSh89,Chazelle93}. 
The former approach has a running time $O\big(\alpha^{d+1}\big)$, 
while the latter needs $O\big(\alpha^{\lfloor d/2\rfloor}\big)$ 
time~\cite{ClarksonSh89,Chazelle93}.
Once we have $\HsSet_P$ at hand, we can run standard LP-type algorithms 
with $O(\alpha^{\lfloor d/2\rfloor})$ constraints to determine the 
closest pair either between $\ch{P}$ and $\fh{\HsSet}$, or between 
$\ch{\polar{\HsSet}}$ and $\fh{\polar{P}}$. 
The running time of such algorithms is 
$O(d^{O(d)} \alpha^{\lfloor d/2\rfloor}) = O(d^{O(d)})$~\cite{chan2018improved}.

To see what happens in the main loop of the algorithm, we apply the
theory of $\eps$-nets, as described in the Section~\ref{sec:tools}.
As mentioned there, the initial set $N$ is a $(1/d^4)$-net for $P$.
Thus, the size of each $V_\Hs$ added to $N$ in Case~2 of the main loop 
of our algorithm is at most $n/d^4$. 
Using Lemma~\ref{lem:dualNet}, the same holds for any set
$V_y$ added in Case~1.
Thus, regardless of the case, the size of $N$ 
at the beginning of the $i$-th loop iteration is at most 
$\max\{in / d^4, \alpha\}$.

The main loop runs for at most $2d + 1$ iterations.
Thus, the size of $N$ never exceeds 
$(2d + 1)n/d^4 \leq \beta n/d^3$, for some constant $\beta > 0$.
Since the algorithm to compute the $(1/d^4)$-net $N$ for $P$ runs in time $O(d^{O(d)}n)$~\cite{chan2018improved,Chazelle01},
we obtain the following recurrence for the running time:
\[ 
T(n, m) \leq \begin{cases} 
\text{LP}\Big(\alpha + \alpha^{\lfloor d/2\rfloor}, d\Big) + 
O\big(\alpha^{\lfloor d/2\rfloor}\big), &\mbox{if } n, m \leq \alpha, \\ 
(2d+1)\cdot T\Big(m, \max\big\{\beta n/d^3, \alpha\big\}\Big) + O(d^{O(d)}n), & 
\mbox{otherwise.}
\end{cases}
\]
We look further into 
$T\big(m, \max\big\{\beta n/d^3, \alpha\}\big)$ and notice that if 
we do not reach the base case, then unfolding the recursion by one more
step yields
\[
T\big(m, \max\{ \beta n / d^3, \alpha\}\big) \leq 
(2d + 1) \cdot T\big(\max\{\beta n / d^3, \alpha\}, \max\{\beta m / d^3, 
\alpha\}\big) + O(d^{O(d)}m).
\]
Thus, by contracting two steps into one,
we get the following more symmetric relation:
\[
T(n, m) \leq (2d+1)^2 \cdot 
T\big(\max\{\beta n / d^3, \alpha\}, \max\{\beta m /d^3, \alpha\}\big) + 
O\big(d^{O(d)}(n+m) \big),
\]
for sufficiently large $n$ and $m$. Together with the base case, 
one can show by induction that  this yields a running time of $O(d^{O(d)} (n+m) )$.
\\

\textbf{Remark.} 
Because the best deterministic algorithm know for LP-type problems with $n$ constraints runs also in time $O(d^{O(d)} n)$~\cite{chan2018improved},
substantial improvements on the running time of our problem seem out of reach. 
If we allow randomization however, then we can improve in two places. 
First of all, by randomly sampling $O(\alpha \log n)$ elements of $P$, we obtain a $(1/d^4)$-net of $P$ with high probability.
Secondly, the base case could be solved with faster algorithms.
The best known randomized algorithms for LP-type problems with $n$ constraints have a running time of $O(d^2 n + 2^{O(\sqrt{d \log d})})$, which substantially improves the dependency on $d$. 
Alternatively, we could use methods to solve convex quadratic programs in the base case to find the closest pair between two H-polytopes.

\subsection{Correctness}\label{sec:Correctnes}

We show that $\Test(P, \HsSet)$ indeed tests 
whether $\ch{P}$ and $\fh{\HsSet}$ intersect.
First, we  verify that the input to
each recursive call $\Test(\cdot, \cdot)$
satisfies either condition (1) or (2). 

\begin{lemma}\label{lem:input_inv}
Let $P \subset \R^d$ be a finite point set and
$\HsSet$ a finite set of halfspaces in $d$ dimensions,
such that either \textup{(1)} $P$ is avoiding while $\HsSet$ is embracing, or 
\textup{(2)}~$P$~is embracing while $\HsSet$ is avoiding.
Consider a call of $\Test(P, \HsSet)$.
Then, the input to each recursive call satisfies
either condition \textup{(1)} or condition \textup{(2)}.
\end{lemma}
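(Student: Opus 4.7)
The plan is to argue by direct structural inspection of the algorithm: the only recursive invocation is $\Test(\polar{\HsSet},\polar{N})$, so it suffices to verify that whenever the outer call satisfies (1) or (2), the polar pair $(\polar{\HsSet},\polar{N})$ does too. The geometric content we need has already been established: Lemma~\ref{lemma: zero inside for valid} tells us that polarizing a valid set of halfspaces preserves the embracing/avoiding status, and Corollary~\ref{corollary: duality} tells us the same for a valid point set. So the whole argument reduces to controlling the status of $N$ relative to $\zero$.

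First I would track $N$ through the execution of $\Test(P,\HsSet)$. Initially, $N\subseteq P$ is an $\eps$-net, and if $\zero\in\ch{P}$ the algorithm explicitly adds $d+1$ points so that $\zero\in\ch{N}$. During the loop, Case~1 appends $\polar{V_y}$, whose elements lie in $\polar{(\polar{P})}=P$ by Lemma~\ref{lemma: Involutory operation}, and Case~2 appends $V_\Hs\subseteq P$ directly. Thus at every moment during execution we have $N\subseteq P$, and $\ch{N}$ only grows, so the implication ``$\zero\in\ch{P}\Rightarrow \zero\in\ch{N}$'' is preserved. In particular, $N$ is always valid and it is embracing exactly when $P$ is embracing.

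Splitting into cases then finishes the job. If (1) holds, then $\zero\notin\ch{P}$ forces $\zero\notin\ch{N}$, so $N$ is avoiding; Corollary~\ref{corollary: duality} yields that $\polar{N}$ is avoiding, and Lemma~\ref{lemma: zero inside for valid} yields that $\polar{\HsSet}$ is embracing, so $(\polar{\HsSet},\polar{N})$ satisfies~(2). If (2) holds, then the augmentation invariant gives $\zero\in\ch{N}$, so $N$ is embracing; the same two results now give $\polar{N}$ embracing and $\polar{\HsSet}$ avoiding, so $(\polar{\HsSet},\polar{N})$ satisfies~(1). Induction on the depth of recursion finishes the argument. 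The only real subtlety is the bookkeeping for $N$ under the two loop branches: one must notice that both branches add only points of $P$ to $N$, so neither the avoiding property (in case~(1)) nor the embracing augmentation (in case~(2)) is destroyed. Once this is observed, everything else is a mechanical application of the polarization lemmas of Section~\ref{sec:tools}.
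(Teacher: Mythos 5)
Your proposal is correct and follows essentially the same route as the paper: induction on recursion depth, the observation that $N\subseteq P$ (with the explicit augmentation guaranteeing $\zero\in\ch{N}$ whenever $\zero\in\ch{P}$) preserves whichever of conditions (1)/(2) holds, and then the polarization results to show that $(\polar{\HsSet},\polar{N})$ satisfies the other condition. Your write-up is in fact slightly more careful than the paper's in two spots --- explicitly checking that both loop branches add only points of $P$ to $N$, and citing Lemma~\ref{lemma: zero inside for valid} for the halfspace side of the duality rather than attributing everything to Corollary~\ref{corollary: duality} --- but the argument is the same.
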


\begin{proof}
We do induction on the recursion depth.
The base case holds by assumption. 
For the inductive step, we note that if the input $(P, \HsSet)$ 
satisfies condition~(1), then $(N, \HsSet)$ also satisfies 
condition (1), for any subset $N \subseteq P$.
For condition~(2), first note that our
algorithm ensures that if $\zero \in \ch{P}$, then also $\zero \in \ch{N}$. 
This implies that if $(P, \HsSet)$ satisfies condition~(2), then 
$(N, \HsSet)$ also satisfies condition~(2). 
Finally, if $(P, \HsSet)$ satisfies condition (1), 
then by Corollary~\ref{corollary: duality} $\big(\polar{\HsSet}, \polar{P}\big)$ 
satisfies condition~(2), and vice-versa. The claim follows.
\end{proof}

We are now ready for the correctness proof.
We show that $\Test(P, \HsSet)$, with $P$ and $\HsSet$ 
satisfying either (1) or (2), computes either the closest pair 
between $\ch{P}$ and $\fh{\HsSet}$, if they are disjoint, or 
the closest pair between $\ch{\polar{\HsSet}}$ and $\fh{\polar{P}}$,
if the polytopes intersect.

We use induction on $\max\{|P|, |\HsSet|\}$. For the base case, 
when the maximum is at most $\alpha$,
our algorithm uses the brute-force method. This certainly 
provides a correct answer, by our assumptions on $P$ and $\HsSet$
and by Theorem~\ref{thm:polarPolyhedra}.

For the inductive set, we may assume that each recursive
call to $\Test(\cdot, \cdot)$ provides a correct
answer. It remains to show  (i) that the main loop
succeeds in at most $2d + 1$ iterations; and (ii) if the
main loop succeeds, the algorithm returns a valid
closest pair.

\paragraph{Number of Iterations.}
We show that the algorithm will never return an \textsc{Error},
i.e., that the loop will succeed in at most $2d + 1$ iterations.
To start, we observe that the cases in the algorithm cannot alternate:
first, we encounter only Case~2, then, we encounter only 
Case~1.
\begin{lemma}\label{lem:case_alternation}
It the main loop in  algorithm $\Test(P, \HsSet)$ encounters Case~1,
it will never again encounter Case~2.
\end{lemma}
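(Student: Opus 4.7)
The plan is to translate Case~1 and Case~2 of the loop into complementary geometric conditions on the pair $(N, \HsSet)$ via Theorem~\ref{thm:polarPolyhedra}, and then exploit the fact that $N$ only grows between successive iterations.

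First I would apply Theorem~\ref{thm:polarPolyhedra} to the pair $(N, \HsSet)$. Its hypotheses are met because Lemma~\ref{lem:input_inv} guarantees that $(P, \HsSet)$ satisfies condition~\textup{(1)} or~\textup{(2)}, and the algorithm explicitly ensures that whenever $\zero \in \ch{P}$ we also have $\zero \in \ch{N}$, so the same condition is inherited by $(N, \HsSet)$. Since $\ch{\polar{\HsSet}}$ is a compact convex set and $\fh{\polar{N}}$ is closed and convex, the theorem then delivers the key equivalence: Case~1 of the recursive call, $\ch{\polar{\HsSet}} \cap \fh{\polar{N}} = \emptyset$, is equivalent to $\ch{N} \cap \fh{\HsSet} \neq \emptyset$, while Case~2 is equivalent to $\ch{N} \cap \fh{\HsSet} = \emptyset$.

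Next I would record that in either branch the main loop only \emph{adds} elements to $N$: the polars of $V_y \subseteq \polar{P}$ in Case~1, or $V_\Hs \subseteq P$ in Case~2. Writing $N_i$ for the value of $N$ at the start of the $i$-th iteration, we therefore obtain $N_i \subseteq N_{i+1}$ and consequently $\ch{N_i} \subseteq \ch{N_{i+1}}$.

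Combining these two ingredients, the argument will run as follows. If iteration $i$ triggers Case~1, then some point $z$ lies in $\ch{N_i} \cap \fh{\HsSet}$, so by monotonicity $z$ also lies in $\ch{N_{i+1}} \cap \fh{\HsSet}$; applying the equivalence of the first step in the reverse direction, iteration $i+1$ must again be a Case~1 iteration, and induction on the iteration index closes the claim. The only subtle point is ensuring that the validity hypothesis of Theorem~\ref{thm:polarPolyhedra} is preserved as $N$ grows, which is handled by the same invariant used in Lemma~\ref{lem:input_inv} together with the algorithm's explicit guarantee that $\zero \in \ch{N}$ whenever $\zero \in \ch{P}$.
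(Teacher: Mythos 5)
Your argument is correct, but it takes a small detour that the paper avoids. The paper's proof stays entirely in the polar: since $N$ only grows, $\polar{N}$ only gains halfspaces, so $\fh{\polar{N}}$ only shrinks; once $\fh{\polar{N}}$ and $\ch{\polar{\HsSet}}$ are disjoint they therefore stay disjoint, and (by the outer inductive hypothesis) the recursive calls keep reporting Case~1. That is the whole proof --- no appeal to Theorem~\ref{thm:polarPolyhedra} is needed. You instead translate both cases into the primal via the biconditional of Theorem~\ref{thm:polarPolyhedra} and run the monotonicity argument on $\ch{N_i} \subseteq \ch{N_{i+1}}$; this is the same monotonicity idea viewed on the other side of the polarity. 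What your route costs is the full strength of the equivalence ``$\ch{\polar{\HsSet}} \cap \fh{\polar{N}} = \emptyset$ iff $\ch{N} \cap \fh{\HsSet} \neq \emptyset$'': the nontrivial direction needs a separating hyperplane between the disjoint polar polytopes that does \emph{not} pass through the origin (otherwise it has no polar point), and your justification via compactness alone does not address this. It does hold here, because under conditions~(1)/(2) one of the two polar polytopes contains $\zero$ in its interior, but you should say so. What your route buys is a statement in the primal that is arguably more intuitive (``once $\ch{N}$ reaches $\fh{\HsSet}$ it never retreats''); the paper's version is shorter and needs only the trivial observation that intersecting more halfspaces gives a smaller polytope.
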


\begin{proof}
In each unsuccessful iteration, the set 
$N$ grows by at least one element, so the convex polytope
$\fh{\polar{N}}$ becomes smaller. Once $\fh{\polar{N}}$ and
$\ch{\polar{\HsSet}}$ are disjoint, they will remain disjoint for the rest
of the algorithm, and by our inductive hypothesis, this will
be reported correctly by the recursive calls to $\Test(\cdot, \cdot)$.
\end{proof}
We now bound the number of iterations in Case~2.

\begin{lemma}\label{lem:roundsCase2}
The algorithm can have at most $d + 1$ iterations in Case~2. 
If there are $d + 1$ iterations in Case~2, then the last
iteration is successful and the algorithm terminates.
\end{lemma}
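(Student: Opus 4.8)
The plan is to track how the polytope $\fh{\HsSet}$ is progressively ``certified'' against $\ch{N}$ as $N$ grows in Case~2, and to argue that each unsuccessful Case~2 iteration forces the closest-pair distance $d(\fh{\HsSet}, \ch{N})$ to strictly increase, while a dimension/Carath\'eodory argument caps the number of strict increases at $d$. First I would recall the mechanics of a Case~2 iteration: the recursive call returns the closest pair $x \in \fh{\HsSet}$, $y \in \ch{N}$; if the separating slab through $y$ has empty conflict set $V_\Hs$ with $P$, we are done (successful iteration), so in an \emph{unsuccessful} iteration there is a point $p \in P \setminus N$ with $p \in \Hs$, and we add all of $V_\Hs$ to $N$. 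By Lemma~\ref{lem:cp_violator_primal}, adding this conflict set makes $d(\fh{\HsSet}, \ch{N})$ strictly larger than before. So across a run of $k$ unsuccessful Case~2 iterations, the distances $d(\fh{\HsSet}, \ch{N_0}) < d(\fh{\HsSet}, \ch{N_1}) < \cdots < d(\fh{\HsSet}, \ch{N_k})$ form a strictly increasing chain.

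Next I would bound $k$ by $d$. The key observation is that the closest point $y_i \in \ch{N_i}$ realizing the distance at step $i$ lies in the relative interior of a face of $\ch{N_i}$ spanned by at most $d$ vertices of $N_i$ (a basis, by Carath\'eodory applied to the minimal face containing $y_i$), and the supporting hyperplane $\partial\Hs_i$ through $y_i$ perpendicular to $x_i y_i$ is the unique hyperplane at distance $d(\fh{\HsSet},\ch{N_i})$ separating the two bodies. Each time the distance strictly increases, the previous witness $y_{i-1}$ is no longer in $\fh{\HsSet}$'s ``forbidden'' halfspace's complement in the right way — more precisely, the newly added points (which lay in the old slab $\Hs_{i-1}$) must enlarge the supporting basis. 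Formally, I would argue that the affine hull of the supporting basis of $y_i$ strictly increases in dimension, or equivalently set up the standard LP-type potential: the distance function $w(N) = d(\fh{\HsSet}, \ch{N})$ (restricted to supersets of the initial net) is an LP-type problem of combinatorial dimension $d$, every unsuccessful Case~2 step is a strict improvement, and a monotone strictly-improving sequence in an LP-type problem of combinatorial dimension $d$ (where each step adds the violators of the current basis) has length at most $d$; hence after at most $d$ unsuccessful iterations no violator exists, i.e.\ the $(d+1)$-st Case~2 iteration must be successful. Combined with Lemma~\ref{lem:case_alternation} (once we leave Case~2 we never return), this gives the bound of $d + 1$ total Case~2 iterations with the last one successful.

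The main obstacle, I expect, is making the ``combinatorial dimension $d$, hence at most $d$ strict improvements'' step fully rigorous in \emph{this} setting rather than appealing to a generic LP-type black box: one must verify that after adding $V_{\Hs_i}$ to $N$, the \emph{new} basis (supporting set of the new closest point) is not contained in the old affine hull — i.e.\ that a genuinely new dimension is used — so that the chain of affine hulls $\aff(B_0) \subsetneq \aff(B_1) \subsetneq \cdots$ strictly grows and therefore has length at most $d$. This needs the fact that every added point lies in the slab $\Hs_i$ (strictly, by general position), so the old supporting hyperplane $\partial\Hs_i$ no longer supports $\ch{N_{i+1}}$ at a face through $y_i$, forcing the new witness off $\aff(B_i)$. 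I would handle the degenerate bookkeeping (the $d+1$ points possibly forced into $N$ for the embracing case, and the initial net) by noting they are fixed once and for all before the loop and do not interfere with the monotonicity argument. Once that is in place, the counting and the ``last iteration successful'' conclusion are immediate.
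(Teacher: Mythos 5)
Your high-level plan (strict monotonicity of the closest-pair distance at every unsuccessful Case~2 iteration, plus a Carath\'eodory/basis counting argument) matches the paper's, but two things go wrong. First, the monotonicity is reversed: adding the conflict set enlarges $N$, hence enlarges $\ch{N}$, so by Lemma~\ref{lem:cp_violator_primal} the distance $d\big(\fh{\HsSet},\ch{N}\big)$ strictly \emph{decreases} at every unsuccessful Case~2 iteration, not increases. This is not just a sign slip, because the counting step needs the correct direction: the contradiction one derives is that an earlier $\ch{N_i}$ would already achieve a distance at most that of the last iterate, which is only absurd if later distances are supposed to be strictly smaller.

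Second, and more substantively, the mechanism you propose for capping the number of unsuccessful iterations at $d$ --- that the affine hulls of the supporting bases form a strictly increasing chain $\aff(B_0)\subsetneq\aff(B_1)\subsetneq\cdots$ --- is false. After adding the conflict set, the new closest point can lie on a face of $\ch{N_{i+1}}$ whose affine hull is unrelated to (and even of lower dimension than) the previous one; in the plane, for instance, the witness can move from the relative interior of one edge of $\ch{N_i}$ to the relative interior of a completely different edge through a newly added vertex, or to that vertex itself. The paper's counting argument runs differently: assume $d+2$ Case~2 iterations, take the closest point $y$ of the \emph{last} iterate $N_{d+2}$ and a Carath\'eodory set $B\subseteq N_{d+2}$ of at most $d$ points with $y\in\ch{B}$ (available because the $(d+2)$-th iteration is still in Case~2, so $y$ lies on a proper face), and show that each of the $d+1$ earlier unsuccessful iterations must add a \emph{new element of this fixed set $B$}: otherwise $B\subseteq N_i$ would give $d\big(\fh{\HsSet},\ch{N_i}\big)\le d\big(\fh{\HsSet},\ch{N_{d+2}}\big)$, contradicting the strict decrease. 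Since $|B|\le d$, this is impossible. This ``each violating step consumes an element of a fixed final basis'' argument is also what the Clarkson-style LP-type fact you allude to actually rests on; your affine-hull substitute does not deliver it, so as written the key step of the proof is missing.
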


\begin{proof}
Suppose there are at least $d + 2$ iterations in Case~2.
By Lemma~\ref{lem:case_alternation}, each iteration until this point encounters Case~2.
Let $N_1 \subset N_2 \subset \dots \subset N_{d + 2}$ be the set $N$ at 
the beginning of the first $d + 2$ iterations in 
Case~2.
By Lemma~\ref{lem:cp_violator_primal} and the inductive
hypothesis, each time 
we run into Case~2 unsuccessfully, the distance between 
$\ch{N}$ and $\fh{\HsSet}$ decreases strictly. Since the first
$d + 1$ iterations in Case~2 are not successful, this means 
$d(\ch{N_i}, \fh{\HsSet}) > d(\ch{N_{i + 1}}, \fh{\HsSet})$, 
for $i = 1, \dots, d + 1$.  

Because the $(d + 2)$-th iteration runs into Case~2, it follows that 
$\ch{N_{d + 2}}$ does not intersect $\fh{\HsSet}$. Let $x$, $y$ be the closest 
pair between $\fh{\HsSet}$ and $\ch{N_{d + 2}}$, with
$y \in \ch{N_{d + 2}}$. Then, $y$ must lie on a face of
$\ch{N_{d + 2}}$, and by Carath\'eodory's 
theorem~\cite[Theorem~1.2.3]{Matousek02}, 
there is a set $B \subseteq N_{d + 2}$ with at most $d$ elements 
such that $y \in \ch{B_{d + 2}}$. 
We claim that 
in each prior iteration $i = 1, \dots, d + 1$,
the conflict set $V_H$ must contain at least one new element of 
$B$. Otherwise, if all the elements of $B$ were already in 
some $N_i$, with $i \leq d + 1$, then $\ch{N_i}$ would contain $y$ and hence 
have a distance to $\fh{\HsSet}$ smaller or equal than $\ch{N_{d + 2}}$, 
leading to a contradiction. Similarly, if in an iteration $i \leq d + 1$, all
elements of $B$ were contained in  $\Hs$, then
the distance between $\ch{N_i}$ and $\fh{\HsSet}$ could not strictly
decrease, by Lemma~\ref{lem:cp_violator_primal}.
However, $B$ contains only $d$ elements, and
we have $d + 1$ iterations, so we cannot add a new element of $B$ at the
end of each iteration
Thus, the main loop can have at most 
$d$ unsuccessful iterations in Case~2 before either encountering
Case~2 successfully or reaching Case~1.
\end{proof}

\begin{lemma}\label{lem:roundsCase1}
The algorithm can have at most $d + 1$ iterations in Case~1.
\end{lemma}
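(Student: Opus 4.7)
The plan is to run an argument dual to Lemma~\ref{lem:roundsCase2}, but now in the polar picture where $\ch{\polar{\HsSet}}$ plays the role of the V-polytope and $\fh{\polar{N}}$ the role of the H-polytope. Assume for contradiction that the main loop executes $d+2$ iterations in Case~1, and write $N_1 \subsetneq N_2 \subsetneq \cdots \subsetneq N_{d+2}$ for the successive values of $N$. By Lemma~\ref{lem:case_alternation} each intermediate iteration is indeed in Case~1, and by the inductive hypothesis on the algorithm's correctness the recursive call at iteration $i$ returns the true closest pair $(x_i, y_i)$ with $x_i \in \ch{\polar{\HsSet}}$ and $y_i \in \fh{\polar{N_i}}$. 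Applying Lemma~\ref{lem:cp_violator_polar} with the roles of points and halfspaces exchanged ($\polar{\HsSet}$ as the point set, $\polar{P}$ as the halfspace set, $\polar{N_i}$ as the subset) forces the distances $d_i := \|x_i - y_i\|$ to be strictly increasing.

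Next I would extract a small witness subset $B \subseteq \polar{N_{d+2}}$ of size at most $d$ that already pins down $y_{d+2}$. The KKT condition at $y_{d+2}$ says that $x_{d+2} - y_{d+2}$ is a non-negative combination of the outer normals of those halfspaces in $\polar{N_{d+2}}$ whose boundary passes through $y_{d+2}$; by Carath\'eodory's theorem for conical combinations (and the general-position hypothesis on $\HsSet$) only $d$ of these normals are needed, giving $B$ with $|B| \leq d$. Convexity then implies that $(x_{d+2}, y_{d+2})$ is still the closest pair between $\ch{\polar{\HsSet}}$ and $\fh{B}$, so $d(\fh{B}, \ch{\polar{\HsSet}}) = d_{d+2}$.

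The core combinatorial step is then the following: at each iteration $i \leq d+1$ the conflict set $V_{y_i}$ must contain at least one new element of $B$. Indeed, if every halfspace of $B$ contained $y_i$ then $y_i \in \fh{B}$, yielding $d_i \geq d(\fh{B}, \ch{\polar{\HsSet}}) = d_{d+2}$ and contradicting the strict monotonicity $d_i < d_{d+2}$. Since every halfspace already in $\polar{N_i}$ contains $y_i$, the set $V_{y_i}$ is automatically disjoint from $\polar{N_i}$, so any element of $B \cap V_{y_i}$ is genuinely new and $|B \cap \polar{N_{i+1}}| > |B \cap \polar{N_i}|$. After $d+1$ such strict increases we obtain $|B \cap \polar{N_{d+2}}| \geq d+1 > |B|$, the desired contradiction.

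The step I expect to need the most care is justifying that $y_{d+2}$ remains the closest point in $\fh{B}$ to $\ch{\polar{\HsSet}}$ once the non-tight constraints are discarded; this is a standard convexity/KKT argument, but it is where the general-position assumption on $\HsSet$ is genuinely used to guarantee $|B| \leq d$. Everything else is a faithful translation of the Case~2 proof through the polar dictionary developed in Section~\ref{sec:tools}.
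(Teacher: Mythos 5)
Your proof is correct and follows essentially the same route as the paper's: the same contradiction setup with $N_1\subsetneq\cdots\subsetneq N_{d+2}$, strictly increasing distances via Lemma~\ref{lem:cp_violator_polar}, a witness set $B$ of at most $d$ constraints determining the final closest pair, and the "one new element of $B$ per unsuccessful iteration" count. Your KKT/conic-\Caratheodory{} justification for the existence of $B$ (which the paper only asserts) and your merging of the paper's two sub-cases into the single observation that $y_i\in\fh{B}$ would force $d_i\geq d_{d+2}$ are welcome clarifications, not a different argument.
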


\begin{proof}
Similar to the proof of Lemma~\ref{lem:roundsCase2}, assume that
we have at least $d + 2$ iterations in Case~1.
Let 
$N_1 \subset N_2 \subset \dots \subset N_{d + 2}$ be the set $N$ 
at the beginning of each such iteration.
By Lemma~\ref{lem:cp_violator_polar} and the inductive hypothesis,
each time we encounter Case~1 unsuccessfully, 
we strictly increase the distance between 
$\ch{\polar{\HsSet}}$ and $\fh{\polar{N}}$. That is, 
$d(\ch{\polar{\HsSet}}, \fh{\polar{N_i}}) > 
d(\ch{\polar{\HsSet}}, \fh{\polar{N_{i + 1}}})$, for $i = 1, \dots, d + 1$.

Because we run into Case~1 in the $(d + 2)$-th iteration, it 
follows that $\ch{\polar{\HsSet}}$ and 
$\fh{\polar{N_{d + 2}}}$ do not intersect. Let $x$, $y$ be the closest pair 
between $\ch{\polar{\HsSet}}$ and $\fh{\polar{N_{d + 2}}}$, with
$y \in \fh{\polar{N_{d + 2}}}$. Let $B$ be the at most $d$ elements in 
$N_{d + 2}$ such that $x$, $y$ is the closest pair of $\ch{\polar{\HsSet}}$ 
and $\fh{\polar{B}}$. Note that $y$ could either be a vertex of 
$\fh{\polar{B}}$, or lie in the relative interior of  one of its faces. 
Observe that in each unsuccessful iteration 
in Case~1, $V_y$ must include a new member of $B$. 
Otherwise, if all the elements of $B$ were already in some $N_i$ with 
$i \leq d + 1$, 
then $\fh{\polar{N_i}}$ would have a distance to 
$\ch{\polar{\HsSet}}$ larger or equal 
than $\fh{\polar{N_{d + 2}}}$, leading to a contradiction. 
Similarly, if in an iteration $i \leq d + 1$, all
elements of $\polar{B}$ were not in conflict with  $y$, then
the distance between $\fh{\polar{N_i}}$ and $\ch{\polar{\HsSet}}$ 
could not strictly
decrease, by Lemma~\ref{lem:cp_violator_polar}.
However, this is impossible, because $B$ has $d$ elements and
we have at least $d + 1$ unsuccessful iterations.
Thus, in the $(d + 1)$-th iteration at the latest, 
we would observe that $V_y$ is empty and 
the algorithm would finish.  That is, the main loop can run for 
at most  $d$ unsuccessful iterations in Case~1.
\end{proof}

Lemmas~\ref{lem:case_alternation}, \ref{lem:roundsCase2}, and
\ref{lem:roundsCase1} guarantee that 
the algorithm will finish successfully within 
$2d + 1$ iterations and that it will never return an \textsc{Error}.
It remains to argue that the algorithm reports a correct closest
pair if one of the two cases is encountered successfully.

\paragraph{Correctness of the Closest Pair.}
We first analyze the success condition of Case~1, i.e., when 
$\ch{\polar{\HsSet}}$ and $\fh{\polar{N}}$ are disjoint.
This condition is triggered when we have a set 
$N \subseteq P$ and a point $y \in \fh{\polar{N}}$ such that 
$V_y = \emptyset$. 
By Lemma~\ref{lem:cp_violator_polar}, the closest pair $x$, $y$ between 
$\ch{\polar{\HsSet}}$ and $\fh{\polar{N}}$ then coincides with the closest 
pair between $\ch{\polar{\HsSet}}$ and $\fh{\polar{P}}$.
In particular, this implies that $\ch{\polar{\HsSet}}$ and $\fh{\polar{P}}$ 
are disjoint.
Because the recursive call returns correctly the closest pair $x$, $y$ 
between $\ch{\polar{\HsSet}}$ and $\fh{\polar{N}}$ by induction, 
it follows that the algorithm correctly returns the 
closest pair between $\ch{\polar{\HsSet}}$ and $\fh{\polar{P}}$.

Next, we analyze the success condition of Case~2, i.e., 
when $\ch{\polar{\HsSet}}$ and $\fh{\polar{N}}$ intersect.
This implies by Theorem~\ref{thm:polarPolyhedra} that $\ch{N}$ and 
$\fh{\HsSet}$ are disjoint. Let $x$, $y$ be the closest pair between 
$\ch{N}$ and $\fh{\HsSet}$, with $y \in \ch{N}$.
The success condition of Case~2 is triggered when $V_H = \emptyset$. 
By Lemma~\ref{lem:cp_violator_primal}, this means that
$x$, $y$ coincides with the closest pair between $\ch{P}$ and $\fh{\HsSet}$.
In particular, $\ch{P}$ and $\fh{\HsSet}$ are disjoint.
Because the recursive call returns correctly the closest pair $x$, $y$ 
between $\ch{N}$ and $\fh{\HsSet}$ by induction, 
the algorithm correctly returns the closest pair 
between $\ch{P}$ and $\fh{\HsSet}$. 
This now shows that $\Test(P, \HsSet)$ is indeed correct.

\subsection{The Final Algorithm}

Finally, we show how to remove the initial assumption 
that $(P, \HsSet)$ satisfies either condition (1) or condition (2).

\begin{theorem}
Let $P$ be a set of $n$ points in $\R^d$ and let $\HsSet$ 
be a set of $m$ halfspaces in $\R^d$. 
We can test if $\ch{P}$ and $\fh{\HsSet}$ intersect in 
$O(d^{O(d)} (n + m))$ time. If they do, then we compute a point in their 
intersection; otherwise, we compute a separating plane.
\end{theorem}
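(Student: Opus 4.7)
The plan is to reduce the general case to the setting where condition~(1) or condition~(2) is satisfied, so that the algorithm $\Test(P, \HsSet)$ analyzed in the previous subsections can be invoked as a black box. The reduction needs only two auxiliary calls to a low-dimensional linear programming routine, each taking $O(d^{O(d)})$ time per constraint by Chan's deterministic algorithm cited in the base case analysis.

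First, I would run an LP-type algorithm on $\HsSet$ itself to decide whether $\fh{\HsSet} = \emptyset$ and, if not, to produce an interior point $q \in \fh{\HsSet}$. This takes $O(d^{O(d)} m)$ time. If $\fh{\HsSet}$ is empty, we return the infeasibility certificate (a small subset of $\HsSet$ with empty intersection), and in particular $\ch{P} \cap \fh{\HsSet} = \emptyset$. Otherwise, translate the whole instance so that $q$ becomes the origin $\zero$; the general position assumption guarantees that no bounding hyperplane of $\HsSet$ passes through $\zero$ after translation (if need be, perturb $q$ slightly to make this hold while staying inside $\fh{\HsSet}$). After translation $\HsSet$ is embracing by construction.

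Next, decide whether $P$ is embracing or avoiding with respect to the new origin. This is precisely the problem of testing $\zero \in \ch{P}$, which is a linear program in $d$ variables with $n$ constraints (search for a unit vector $u$ with $\langle u, p\rangle > 0$ for all $p \in P$), and can be solved in $O(d^{O(d)} n)$ time. If $\zero \in \ch{P}$, then $\zero \in \ch{P} \cap \fh{\HsSet}$ and we return the pre-image of $\zero$ (i.e.\ $q$) as a witness point. Otherwise $P$ is avoiding while $\HsSet$ is embracing, so $(P, \HsSet)$ satisfies condition~(1), and we invoke $\Test(P,\HsSet)$.

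Finally, we convert the output of $\Test(P,\HsSet)$ into the format required by the theorem. If $\Test$ reports that $\ch{P}$ and $\fh{\HsSet}$ are disjoint, it returns the closest pair $(x,y)$ between them, and the perpendicular bisector hyperplane of the segment $xy$ is the desired separator. If $\Test$ reports intersection, by Theorem~\ref{thm:polarPolyhedra} it returns the closest pair $(x',y')$ between $\fh{\polar{P}}$ and $\ch{\polar{\HsSet}}$; the perpendicular bisector $\hp$ of $x'y'$ separates these two polar polytopes, and its polar point $\polar{\hp}$ lies in $\ch{P} \cap \fh{\HsSet}$ by the same theorem, so we return $\polar{\hp}$ (translated back) as the witness point. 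Summing the three phases gives the claimed bound $O(d^{O(d)}(n+m))$. The only genuine subtlety is the preservation of general position after translation, which is handled by the standard infinitesimal perturbation already implicit in the paper's assumptions; everything else is routine once the two preprocessing LPs are in place.
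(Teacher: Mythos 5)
Your proposal is correct and follows essentially the same route as the paper: compute a point of $\fh{\HsSet}$ by linear programming, translate it to the origin so that $\HsSet$ is embracing, test $\zero \in \ch{P}$ by a second LP, and otherwise invoke $\Test(P,\HsSet)$ under condition~(1). The extra details you supply (handling $\fh{\HsSet}=\emptyset$, converting the closest pair into a separator or a polar witness point via Theorem~\ref{thm:polarPolyhedra}) are consistent with, and slightly more explicit than, the paper's own argument.
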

\begin{proof}
Recall that our algorithm requires that either 
(1) $\zero\notin \ch{P}$ and $\zero \in \fh{\HsSet}$, or 
(2) $\zero \in \ch{P}$ and $\zero \notin \bigcup_{\Hs \in \HsSet} \Hs$ 
to work, which might not hold for the given $P$ and $\HsSet$.
Thus, before running $\Test(P, \HsSet)$, we first compute a 
point in $\fh{\HsSet}$ using standard linear programming 
and change the coordinate system so that this point coincides with $\zero$. 
Then, we test using standard linear programming if $\zero\in \ch{P}$. 
If it is, we are done.
Otherwise, we guarantee that condition (1) is satisfied, and 
we can run $\Test(P, \HsSet)$ in $O(d^{O(d)} (n + m))$ time.
\end{proof}

\paragraph{Acknowledgments.}
This work was initiated at the \emph{Sixth Annual Workshop on Geometry 
and Graphs}, that took place March 11--16, 2018, at the
Bellairs Research Institute. We would like to thank the organizers
and all participants of the workshop for stimulating discussions
and for creating a conducive research environment. We would also like
Timothy M.~Chan for answering our questions about LP-type problems
and for pointing us to several helpful references.

\bibliographystyle{abbrv}
\bibliography{acit}

\end{document}